\documentclass[10pt,letterpaper]{article}
\usepackage[margin=1.25in]{geometry}
\usepackage{fullpage}
\usepackage{subcaption,amsmath,amssymb,amsthm,mathrsfs,mathabx}
\usepackage{thmtools,booktabs,xspace}
\usepackage{dsfont}
\usepackage[normalem]{ulem}
\usepackage{multirow}
\usepackage{comment}
\usepackage{hyperref}
\usepackage[svgnames]{xcolor}
\usepackage[capitalise,nameinlink]{cleveref}
\definecolor{links}{RGB}{11, 85, 255}
\definecolor{cites}{RGB}{0, 200, 0}
\definecolor{urls}{RGB}{255, 116, 0}
\hypersetup{colorlinks={true},linkcolor={links},citecolor=[named]{cites},urlcolor={urls}}
\usepackage{ocgx2}
\usepackage[numbers,sort&compress]{natbib}
\usepackage{booktabs} 
\usepackage[ruled]{algorithm2e} 

\usepackage{tikz,bm,thm-restate,cleveref}

\usetikzlibrary{arrows}
\usetikzlibrary{positioning}


\newtheorem{thm}{Theorem}
\newtheorem{claim}{Claim}

\newtheorem{corr}{Corollary}
\newtheorem{definition}{Definition}
\newtheorem{remark}{Remark}

\declaretheorem[name=Theorem,sibling=thm]{restate-thm}


\newcommand{\cp}{c_{\P}}
\newcommand{\cq}{c_{\Q}}
\newcommand{\pp}{\pi_\P}
\newcommand{\pq}{\pi_\Q}
\newcommand{\lambdap}{\lambda_\P}
\newcommand{\lambdaq}{\lambda_\Q}
\newcommand{\xp}{X_{\P}}
\newcommand{\xq}{X_{\Q}}
\newcommand{\zp}{z_{\P}}
\newcommand{\zq}{z_{\Q}}
\newcommand{\ind}[1]{\mathbb{I}_{#1}}
\newcommand{\R}{\mathbb R}
\renewcommand{\L}{\mathcal L}
\newcommand{\M}{\mathcal M}
\renewcommand{\P}{\mathcal P}
\newcommand{\Q}{\mathcal Q}
\newcommand{\Ci}{\mathscr{C}}
\newcommand{\E}[1]{\mathbb E\left[ #1 \right]}

\newcommand{\F}{\mathcal F}
\renewcommand{\Pr}[1]{\mathbb{P} \left( #1 \right)}

\newcommand{\argmax}{\operatorname{arg\,max}}

\newcommand{\PBP}{Pandora's Box Problem\xspace}
\newcommand{\PBPTree}{Tree-Constrained Pandora's Box Problem\xspace}
\newcommand{\PBPDag}{Pandora's Box Problem with order constraints\xspace}

\begin{document}

\title{Pandora's Box Problem with Order Constraints}

\author{Shant~Boodaghians\\
University of Illinois\\
Urbana, IL, USA\\
\texttt{boodagh2@illinois.edu}
\and
Federico~Fusco\\
Sapienza University, DIAG\\
Rome, Italy\\
\texttt{fuscof@diag.uniroma1.it}
\and
Philip Lazos\\
Sapienza University, DIAG\\
Rome, Italy\\
\texttt{lazos@diag.uniroma1.it}
\and
Stefano~Leonardi\\
Sapienza University, DIAG\\
Rome, Italy\\
\texttt{leonardi@diag.uniroma1.it}
}

\maketitle

\begin{abstract}
The Pandora's Box Problem, originally formalized by Weitzman in 1979, models selection from a set of options each with stochastic parameters, when evaluation (i.e. sampling) is costly. This includes, for example, the problem of hiring a skilled worker, where only one hire can be made, but the evaluation of each candidate is an expensive procedure. 

Weitzman showed that the Pandora's Box Problem admits a simple and elegant solution which considers the options in decreasing order of the value it which opening has exactly zero marginal revenue.
We study for the first time this problem when the order in which the boxes are opened is constrained,
which forces the threshold values to account for both the depth of search, as opening a box gives access to more boxes, and breadth, as there are many directions to explore in.
Despite these difficulties, we show that greedy optimal strategies exist and can be efficiently computed for {\em tree-like} order constraints.

We also prove that finding approximately optimal adaptive search strategies is NP-hard when certain matroid constraints
are used to further restrict the set of boxes which may be opened,
or when the order constraints are given as reachability constraints on a DAG.
We complement the above result by giving approximate adaptive search strategies based on a connection between optimal adaptive strategies and non-adaptive strategies with bounded adaptivity gap  for a carefully relaxed version of the problem.
\end{abstract}

\section{Introduction}

Stochastic search is an important problem in many fields of application, and has seen theoretical study. 
Modelling evaluation as costly captures the commonly observed trade-off between exploration and exploitation, and is also a natural assumption in many settings. 
Furthermore, it is common to only select one of the many evaluated alternatives, such as when searching for a skilled employee, or choosing a house. 
These ideas were formalized as the {\em \PBP} by Weitzman, in 1979~\cite{weitzman}.  
In this setting, we are presented with various alternatives modeled by a set of boxes  $B=\{b_1,\,\dotsc,\,b_n\}$, 
where box $b_i$  costs $c_i$ to open, and has random  payoff $X_i$, whose distribution is known.  A strategy $\pi$ is  a rule which determines adaptively whether to terminate the search, and if not, which box to open next.  
The goal is to choose a strategy $\pi$ which maximizes, in expectation, the following objective:
            \[\textstyle
        	\max_{i\in S(\pi)} X_i - \sum_{i\in S(\pi)} c_i,
            \]
where $S(\pi)$ is the set of boxes opened by strategy $\pi$.
Only \emph{one} reward can be kept in the end, but we must pay for all opened boxes.

Despite the broad range of search strategies available, Weitzman showed that the solution to this problem boils down to a simple but subtle strategy.
Each box is assigned a \emph{reservation value} $\zeta_i$ satisfying the equation $\E{\max\{0,(X_i-\zeta_i)\}-c_i}=0$. This is precisely the value that the player would need to have collected, in order to be indifferent between proceeding or not, and is used as a proxy for the value of the box.
The optimal strategy is to greedily open the boxes in descending order of reservation value, and to stop when there is no box left or when the maximum reward seen in the past is greater than the reservation value of the next box. Note that the order of exploration is not adaptive, but the stopping time is.

In this paper, we enrich the classical \PBP by adding restrictions on the order in which boxes can be accessed.
To motivate this, consider the task faced by a funding agency or a research and development department. There may be different high-level directions to explore, with differing endpoints.
Is it better to follow a longer line of research with a higher probability of achieving moderately interesting results, or take a riskier but shorter path which could contain a miraculous discovery at the end? In both cases, the final goal as well as the way to reach it will greatly affect the costs and payoffs generated.

With these extra constraints, the search for the optimal strategy becomes much more complex. We seek to determine the order in which to assess the options and when to stop searching, given prior knowledge of the values, the structure of the constraints and the realizations of the observed rewards. 
Weitzman's classical greedy strategy~\cite{weitzman} does not apply:  opening an expensive box with little reward may be needed in order to access more valuable and inexpensive boxes. 

As a motivating case for this line of work, we show that when the order constraints are tree-like --- that is when the boxes are nodes of a tree, and can only be opened after their parents are --- then optimal search procedures may be found. 
This also applies for forests.
Furthermore, we investigate the relation between such order constraints and other models of the \PBP which have been studied in the past, such as in \cite{singla2018price}. We formalize these results in the following section.

\subsection{Overview of Results and Methods}\label{sec:overview}
In this paper we focus on the \PBP with constraints on the order of exploration.
In what follows, {\em polynomial time} means polynomial in the number of boxes and the size of the support of the random variables. This implicitly suggests that the random variables are discrete and have support size polynomial in the number of boxes. 
This assumption is not restrictive: we extend the techniques of \citet{guo2019generalizing} to prove that a polynomial number of samples is enough to get an $\epsilon$ additive approximate of the solution for the \PBPDag. Moreover, these bounds are tight
for the \PBPTree. 
To approximate the performance of optimal policies within an additive $\epsilon>0$, with probability $1-\delta$, for $n$ boxes with  rewards and costs supported in [0,1], it suffices to learn each reward distribution from $\tilde O(n^3/\epsilon^3)$ samples. 
For the special case of tree constraints, it suffices to take $\tilde O(\tfrac n{\epsilon^2})$ samples.
We have hidden $\operatorname{poly}(\log(\tfrac n{\epsilon\delta}))$ terms.
See Appendix~\ref{app:learning} for details. 

These ideas allow us to proceed with the analysis while assuming without loss of generality that the random variables have support $poly(n)$.
We further note that this result suggests the methods in this paper are robust to imprecise knowledge of the distribution of the rewards on the boxes.

\subsubsection{The \PBPTree.}
We first consider the Tree case, where  we present an optimal strategy with a nice structure analogous to Weitzman's. 
\begin{thm} When order constraints are given by a rooted tree over the boxes, there exists an optimal-in-expectation strategy 
of the following form:
first, label each box with a ``threshold'', an order-aware analog of Weitzman's reservation value. Then:
\begin{itemize}
    \item From the boxes that can be opened next, choose the one with the largest threshold.
    \item Terminate if the max observed value exceeds this threshold, otherwise open the box and repeat.
\end{itemize}
Furthermore, the optimal thresholds can be computed in polynomial time.
\end{thm}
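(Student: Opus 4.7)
The plan is to generalize Weitzman's reservation-value and capping argument from a flat collection of boxes to a rooted tree. Concretely, I would define a threshold $\tau_b$ for each box $b$ via a bottom-up recursion on the tree, prove that any feasible strategy's expected payoff is upper bounded by a clean expression involving only these thresholds, and finally exhibit the greedy strategy described in the statement as one attaining this bound.

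For the definition, at a leaf $b$ I would set $\tau_b$ equal to Weitzman's original reservation value, i.e., the unique $t$ solving $\E{\max(X_b - t, 0)} = c_b$. At an internal box $b$, assuming thresholds have already been computed throughout the subtree hanging below $b$, I would define $\tau_b$ as the \emph{indifference outside option}: the value of $t$ at which a player who may enter the subtree rooted at $b$ and play optimally against outside option $t$ has expected net gain exactly $0$. Under the polynomial-support assumption from \Cref{sec:overview}, these indifference values can be computed in polynomial time by discrete search at each node, proceeding upward from the leaves.

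The core of the argument is a capping identity in the spirit of Weitzman. For each box $b$ define the capped variable $Y_b = \min(X_b, \tau_b)$. I would show, by induction on the height of the tree, that for any feasible strategy $\pi$ the expected payoff is at most $\E{\max_{b \in S(\pi)} Y_b}$, where $S(\pi)$ is the (random) set of opened boxes. The inductive step uses the defining equation of $\tau_b$ to rewrite the cost paid minus the reward gained inside the subtree at $b$ as a telescoping, martingale-style identity that caps each box's contribution at $\tau_b$. Once this upper bound is established, the second half of the argument verifies that the greedy strategy in the theorem opens exactly the (random) subtree whose capped maximum is largest, attaining the bound with equality.

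The main obstacle I expect is showing that the greedy \emph{threshold} strategy is optimal among all adaptive strategies, not merely among "commit-to-a-subtree" strategies. This is precisely where depth and breadth interact: with several children of already-opened boxes simultaneously available, one must argue that always descending into the child with the largest $\tau$ is safe, even though the strategy may later switch to a sibling and even though the child itself leads to an entire subtree rather than a single box. I would handle this by a local exchange argument layered on the capping identity: if an optimal strategy opens a box of smaller threshold before a currently-available one of larger threshold, swapping their opening order (and re-aligning the stopping rule through the capped values $Y_b$) preserves feasibility on the tree and does not decrease expected payoff. Combined with the inductive upper bound, this yields the stated structural form of the optimum, and polynomial-time computability follows directly from the bottom-up recursion that produces the thresholds.
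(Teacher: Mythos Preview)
Your proposal rests on a Kleinberg--Weitzman--Wright capping bound, and the paper explicitly flags in its related-work discussion that these covered-call techniques do not extend to order-constrained settings. The failure is concrete. For the capping argument to be \emph{tight} you need $c_b = \E{(X_b - \tau_b)_+}$ at every box, so that the cost of opening $b$ is exactly charged to the gap $X_b - Y_b$. At a leaf this holds by Weitzman's definition, but at an internal node your $\tau_b$ is the indifference value for the \emph{entire subtree} rooted at $b$, not for $b$ alone; since the subtree can only help, $\tau_b$ is at least the single-box reservation value, and typically $\E{(X_b - \tau_b)_+} < c_b$ strictly. The inequality $\E{\max_{b\in S(\pi)} X_b - \sum_{b\in S(\pi)} c_b} \le \E{\max_{b\in S(\pi)} Y_b}$ may well survive (you are undercharging each box), but it is no longer achieved by any feasible strategy, so your ``attaining the bound with equality'' step fails.

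A two-box line already exhibits this. Let $b_1$ have cost $1$ and deterministic reward $0$, and let its only child $b_2$ have cost $0$ and reward $10$ with probability $\tfrac12$, else $0$. Then $\tau_{b_2}=10$, $\tau_{b_1}=8$, so $Y_{b_1}=\min(0,8)=0$ and $Y_{b_2}=X_{b_2}$. The greedy strategy opens both boxes and earns $\E{X_{b_2}}-1=4$, which is optimal, whereas $\E{\max(Y_{b_1},Y_{b_2})}=\E{X_{b_2}}=5$. The slack is exactly the cost $c_{b_1}=1$ that your capping cannot absorb, and no feasible strategy closes it. Since the bound is loose, layering a swap argument ``on the capping identity'' cannot certify optimality either.

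The paper takes a different route with no global surrogate upper bound. It first solves a single line by dynamic programming, defining the generalized reservation value of $b_i$ through the optimal stopping time on the remaining suffix. The real technical work is for a \emph{union of disjoint lines}: an inductive exchange argument compares two carefully constructed strategies on ``macroboxes'' (maximal prefixes whose internal thresholds all exceed the head threshold, so that once entered they are played to exhaustion independently of the other line) and shows directly that entering the line with the largest head threshold first is never worse. Trees then follow as a corollary: by induction the optimal order within each child subtree is already fixed, so each subtree can be linearized and merged, reducing every state to the multi-line case. Your bottom-up recursion for the thresholds and your polynomial-time claim are essentially what the paper does; it is the optimality proof via capped variables that does not go through.
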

Notice that such threshold strategies are simple and intuitive, 
and enforce the desirable property that the order of the exploration is fixed up to tie-breaking. 
Furthermore, that an optimal strategy has this form is surprising, as this need not hold for general constraints, as discussed below.
The definition of these thresholds has to address the following additional complications:

        \begin{itemize}
            \item[(i)] \textbf{Depth:} The value of a box is not only given by its reward and cost, but also from the possibilities its opening makes accessible. This effect propagates level after level: even the deepest of the leaves can influence the decision to open the root.
            \item[(ii)] \textbf{Breadth:} A key feature is the order in which the boxes are opened. It is difficult to model and optimize the interplay between different explored branches of a tree, as distant directions of exploration must  be compared at every time step.  
        \end{itemize}
To overcome these difficulties, the first step is to generalize the reservation values used by Weitzman to the setting where the boxes are constrained to be opened in a fixed order $[b_1,\, b_2,\,\dotsc,\,b_n]$.
These values must take into account the \emph{future} as well as the present.
Our solution consists in defining the threshold value of the generic box $i$ according to a random  \emph{stopping time} $\tau^*(y,i)$, which indicates the last box that will be opened {\em playing optimally} given that the player has already found reward $y$ and is in front of box $i$. We call this threshold value $z_i$. 
It relates to the original definition as follows: 
\[\E{\left(X_i-\zeta_i\right)_+-c_i}=0
\quad
\longleftrightarrow
\quad
\E{\left( \textstyle\max_{j=i}^{\tau^*(z_i,i)}X_j-z_i\right)_+-\textstyle\sum_{j=i}^{\tau^*(z_i,i)}c_j}=0,\]
where $(a)_+:=\max\{a,0\}$.
From these stopping times and threshold values we can infer that certain runs of boxes are essentially treated as one big box (which we refer to as a \emph{macrobox}). If the algorithm decides to enter it, the exploration will either finish inside this macrobox or a decision will be made to enter the next one after having exhausted it. Moreover, these threshold values can be computed in polynomial time by a dynamic programming procedure.

For a single line, this reasoning may seem straightforward, but this property still holds when the constraints consist of a union of disjoint parallel lines. A na\"ive dynamic program would not be effective, as the state space is exponential in the number of lines.
However, the threshold strategy which uses the reservation values computed for each line independently is still optimal: the algorithm will always enter the best available macrobox and either terminate search inside of it, or move on to another one, possibly from a different line. 
Surprisingly, the same approach works for \emph{trees and forests}. 
Proceeding from leaves to the root it is possible to linearize the trees and use the definition of reservation value to induce a threshold strategy which is indeed optimal. 

\subsubsection{Impossibility and Hardness Results.}
Unfortunately, these results do not extend to slightly more general constraint structures. We show below that it is NP-hard to approximate an optimal solution to the problem with some types of matroid constraints or more general order constraints.
Remark that the standard notion of approximation --- {\em i.e.} finding a solution whose performance is at least a multiplicative factor of the optimal solution in expectation --- is not meaningful in this setting: a hard example ({\em e.g.} the hardness proof of \Cref{sec:hardness}) can be modified by adding a large-cost-no-payoff dummy box at the root of the tree. For the right cost, optimal strategies would have positive revenue, but approximately optimal strategies would have negative revenue.

We define below a modified notion of approximately optimal solution, which avoids these concerns. 
Note that similar metrics have been used before, in works such as \cite{feldman2019guess,harshaw2019submodular,sviridenko2017optimal}.  

\begin{definition}[Approximately Optimal Solutions]\label{def:appx}
     In this paper, we consider a notion which approximates only the reward term while paying similar costs.
    Formally, we seek a strategy $\hat \pi$ such that for any other rule $\pi$,
    \[
        \textstyle\E{\max_{i\in S(\hat \pi)} X_i - \sum_{i\in S(\hat \pi)} c_i}
        \geq 
        C\cdot \E{\max_{i\in S(\pi)} X_i} - \E{\sum_{i\in S(\pi)} c_i}
    \]
    for some universal constant factor $C \in (0,1]$, where $S(\pi)$ is the (random) set opened by strategy $\pi$.
\end{definition}

\begin{thm}
Consider the \PBPDag when either (i) a matroid constraint is added to the tree constraint or (ii) the tree constraint is generalized to a DAG. 
In either case, it is NP-hard to find a 0.9997-approximately optimal solution, and furthermore, the optimal solution need not have a fixed order of exploration.
\end{thm}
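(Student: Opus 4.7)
The plan is to prove both hardness statements by gap-preserving reductions from a problem with a known constant inapproximability gap. A natural source is \textsc{Max-E3-Sat}, whose $(7/8+\varepsilon)$-inapproximability (H\aa stad) can be transported through a carefully calibrated construction. In each of cases (i) and (ii), an instance of \textsc{Max-E3-Sat} is encoded by an instance of the \PBPDag where variables correspond to ``choice gadgets'' (boxes whose opening represents a truth assignment) and clauses correspond to reward-bearing boxes tuned so that the expected collected maximum reward tracks the number of satisfied clauses. Calibrating opening costs so that exploration is only affordable for \emph{consistent} assignments turns the additive $\varepsilon$-gap of \textsc{Max-E3-Sat} into a multiplicative constant $C<1$ in the sense of \Cref{def:appx}; padding by a common ``easy'' reward mass amplifies this to the claimed $C = 0.9997$.

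Concretely, for the DAG case one introduces source boxes $T_i$ and $F_i$ for each variable $x_i$ (small cost, zero reward), and for each clause $c_j$ a clause box $C_j$ reachable in the DAG only from those literal-sources that satisfy $c_j$; the reward of $C_j$ is a small Bernoulli so that, in expectation, only opened clauses contribute. For the matroid case, $T_i$ and $F_i$ become the two children of a variable-root in a shallow tree, a partition matroid of rank one is imposed on each pair $\{T_i,F_i\}$, and clauses are leaves whose reward distributions sum to track the number of satisfied clauses. In both constructions all opening costs are of the same order as the maximum reward, so inconsistent or wasteful openings are directly penalized in the adjusted objective of \Cref{def:appx}.

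Completeness is immediate: given a satisfying assignment, the strategy that opens the corresponding literal-boxes and all reachable clause boxes achieves the desired expected value. The main obstacle is soundness, where one must show that \emph{every} adaptive strategy --- including those that open inconsistent literal sets in order to hunt for extra clause realizations --- has expected collected reward at most a constant times the \textsc{Max-E3-Sat} optimum plus the cost it actually pays. I would establish this by an exchange lemma: any inconsistent opening (e.g.\ both $T_i$ and $F_i$) can be replaced by a consistent one that weakly dominates the adjusted objective in \Cref{def:appx}, reducing the soundness analysis to the combinatorial value of \textsc{Max-E3-Sat}. Coping with adaptivity here is the most delicate step, and I expect the exchange argument must proceed on the tree of decision paths of the strategy rather than on any single fixed opening set.

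Finally, to show that optimal strategies need not follow a fixed order, I would exhibit a small DAG with two root boxes $A$ and $B$ and shared descendants whose rewards are correlated with $A$'s realization. With costs chosen so that $A$ is always opened first, the optimal continuation --- whether to explore $B$ next or to descend from a child of $A$ --- depends strictly on the value observed at $A$, so no fixed ordering of the remaining boxes is optimal, in contrast to the tree case above.
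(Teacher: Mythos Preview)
Your route differs from the paper's, which reduces from \textsc{Minimum Vertex Cover} on cubic graphs rather than \textsc{Max-E3-Sat}. In the paper's DAG, vertex-boxes are sources with cost $1$ and zero reward, and edge-boxes are sinks with cost $0$ and an identical Bernoulli reward ($\beta m$ w.p.\ $c/m$). Because all edge-boxes are identically distributed and free, every sensible strategy opens them as they become accessible and stops only when the reward fires; the hardness sits entirely on the \emph{cost} side, namely the expected number of vertex-boxes paid for before stopping. That expectation is computed in closed form as a function of how many new edges each successive vertex uncovers, and the gap between an $\alpha m$-size cover and a $(1+\epsilon_0)\alpha m$-size cover is lower-bounded by a direct calculation---no exchange lemma and no decision-tree argument is needed. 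The tree-plus-matroid corollary then follows by replacing each edge-box with three identical copies (one per incident vertex) under a partition matroid on the copies.

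Your construction pushes the hardness to the \emph{reward} side instead, and this is where the real gap lies. First, the objective is $\max_{i\in S} X_i$, not a sum: with i.i.d.\ Bernoulli clause-boxes the expected maximum is a concave function of the number opened, so it does not ``track the number of satisfied clauses'' linearly, and the $7/8$ gap of \textsc{Max-E3-Sat} does not transfer without an argument you have not supplied. Second, and more seriously, in the DAG case nothing forbids opening both $T_i$ and $F_i$, which makes \emph{every} clause box reachable. You defer this to an exchange lemma asserting that inconsistent openings are dominated, to be argued ``on the tree of decision paths,'' but that is the entire content of soundness and you give no argument for it. It is far from clear it holds: an adaptive strategy may open several literal pairs, inspect the newly accessible clause boxes, and continue only if nothing has fired---exactly the kind of behaviour the paper's vertex-cover reduction sidesteps, since there the analogue of ``open both literals'' (open all vertices) is what the optimal strategy is trying to do anyway, and only the order and stopping time are at issue. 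Your tree-plus-matroid construction is also underspecified: with the matroid placed on $\{T_i,F_i\}$ rather than on clause copies, you have not said where clause leaves sit in the tree, and any placement under a single literal parent loses the ``reachable iff satisfied'' correspondence.

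Finally, your no-fixed-order sketch invokes descendants ``whose rewards are correlated with $A$'s realization,'' but the model requires independent $X_i$. The paper's four-box counterexample uses independent rewards; the order-dependence arises because the value of continuing along one branch versus another depends on the maximum already collected, not on any statistical dependence among the boxes.
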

This approximation hardness is shown in the sense of Definition \ref{def:appx}, but extends to the classical notion, since it is stronger.
We finally remark that proving hardness of approximation for stochastic problems needs to address the effects of the randomness on the objective and the search trajectory, and as such is challenging. Moreover, the structure of the $\max_S X_i - \sum_S c_i$ objective makes optimal {\em random} solutions difficult to ``hide'', in a standard combinatorial sense.

\subsubsection{Approximation Results.}
The hardness result above, along with recent work on modified versions of the \PBP~ \cite{beyhaghi2019pandora,singla2018price,singla2018combinatorial,kleinberg2018delegated,esfandiari2019online}, motivates the study of approximation algorithms for the more general case of order constraints. Of the above citations, the ones closest to our setting are \cite{singla2018price,singla2018combinatorial}, where the author reduces the \PBP in the presence of downwards-closed constraints to adaptive maximization of non-negative submodular functions. 
The key concept in their work is the {\em adaptivity gap}, i.e. the ratio between the best adaptive solution and the best non-adaptive one for this new problem. 
We similarly show the following.

\begin{restatable}{restate-thm}{adaptivity}
\label{thm:adaptivity}
Consider the \PBP with constraints modeled by some prefix closed family --- a generalization of order constraints, defined in Section~\ref{sec:model},
then for every adaptive strategy $\pi$, there exists a non-adaptive strategy, i.e. a feasible set $S$, such that the following holds\footnote{
        To further justify the benchmark of Definition~\ref{def:appx}, we show in Appendix \ref{app:counterexamples} that approximation in the traditional sense with a non-adaptive set is impossible, as there exist examples where this classical adaptivity gap is arbitrarily large.}:
    \begin{equation}\textstyle
    \E{\max_{i \in S} X_i - \sum_{i \in S}c_i} \geq \tfrac{1}{2} \E{\max_{i\in S(\pi)} X_i} - \E{\sum_{i\in S(\pi)} c_i}
    \end{equation}
\end{restatable}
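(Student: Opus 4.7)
The plan is to reduce the theorem to an adaptivity-gap bound for a cleaner surrogate objective, in the spirit of the ``Price of Information'' paradigm of \citet{singla2018price}. First, assign to each box $i$ the Weitzman reservation value $\zeta_i$ defined by $\E{(X_i-\zeta_i)_+}=c_i$, and introduce the surrogate random variables $Y_i := \min\{X_i,\zeta_i\}\le X_i$. The point of $Y_i$ is that it already ``prices in'' the probing cost $c_i$, so that one can reason about a rewards-only objective built out of the $Y_i$'s.

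The first step is a Singla-style inequality: for every adaptive strategy $\pi$,
\[
\E{\max_{i\in S(\pi)} X_i - \textstyle\sum_{i\in S(\pi)} c_i}\;\le\;\E{\max_{i\in S(\pi)} Y_i}.
\]
This follows from the pointwise bound $\max_i X_i \le \max_i Y_i + \sum_i (X_i-\zeta_i)_+$, which is a consequence of $X_i = Y_i + (X_i-\zeta_i)_+$, combined with the crucial observation that the indicator $\mathbb{I}[i\in S(\pi)]$ is decided by the strategy \emph{before} $X_i$ is observed, hence it is independent of $X_i$. Therefore $\E{(X_i-\zeta_i)_+\cdot \mathbb{I}[i\in S(\pi)]} = c_i\,\Pr{i\in S(\pi)}$, and summing over $i$ produces exactly the adaptive expected cost $C(\pi)$, which cancels on the right-hand side.

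The second step is an adaptivity-gap argument for the submodular-minus-linear objective $F(S):=\E{\max_{i\in S}Y_i}-c(S)$: exhibit a feasible non-adaptive $S$ with $F(S)\ge \tfrac12 \,\E{\max_{i\in S(\pi)} Y_i - \sum_{i\in S(\pi)} c_i}$. A natural construction is the random set $S^\star := S(\pi,X')$ obtained by running $\pi$ on an independent fresh copy $X'$ of the rewards; the prefix-closed assumption guarantees that $S^\star$ is always feasible, and by symmetry its expected cost equals $C(\pi)$. A coupling between $X$ and $X'$ (using that $\pi$'s stopping rule depends only on \emph{observed} values) then bounds the surrogate-reward loss by a factor of at most $2$, and the probabilistic method selects the desired deterministic $S$ from the support of $S^\star$.

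Putting the pieces together: Step~1 gives $\E{\max_{i\in S(\pi)} Y_i - \sum_{i\in S(\pi)} c_i} \ge R(\pi)-2C(\pi)$, so Step~2 yields $F(S)\ge \tfrac12 R(\pi)-C(\pi)$, and the theorem follows from $X_i\ge Y_i$, since $\E{\max_{i\in S} X_i - \sum_{i\in S} c_i}\ge F(S)$. The main obstacle is establishing the factor-$2$ bound in the second step under a general prefix-closed family: the cost bookkeeping is automatic via the symmetric construction $S^\star = S(\pi,X')$, but the reward comparison is delicate and requires carefully analyzing the branching structure of $\pi$'s exploration to show that, on an independent copy, the surrogate values collected remain within a factor $2$ of those collected adaptively.
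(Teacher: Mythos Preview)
Your reduction through the surrogates $Y_i=\min\{X_i,\zeta_i\}$ does not close. The algebra in your ``Putting the pieces together'' paragraph requires the \emph{multiplicative} form of Step~2,
\[
F(S)\;\ge\;\tfrac12\Bigl(\E{\textstyle\max_{i\in S(\pi)}Y_i}-C(\pi)\Bigr),
\]
and this is simply false for prefix-closed constraints. Take $n$ i.i.d.\ boxes (no precedence), each with cost $c=p/4$ and $X=1$ with probability $p$, $X=0$ otherwise, so $\zeta=3/4$ and $Y=3/4$ with probability $p$. The adaptive rule that opens until it sees $X=1$ attains
\[
\E{\textstyle\max_{S(\pi)}Y_i}-C(\pi)\;\xrightarrow[n\to\infty]{}\;\tfrac34-\tfrac14=\tfrac12,
\]
whereas any fixed $k$ boxes give $F(\{1,\dots,k\})=\tfrac34\bigl(1-(1-p)^k\bigr)-kp/4$, maximised near $(1-p)^k=\tfrac13$ with value $\tfrac12-\tfrac{\ln 3}{4}\approx 0.225<\tfrac14$. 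So the best non-adaptive $F$ is strictly below half of the adaptive value, and your Step~2 fails.

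The coupling you describe (run $\pi$ on an independent copy $X'$, measure rewards on $X$) would in fact establish only the \emph{mixed} bound $F(S)\ge\tfrac12\,\E{\max_{S(\pi)}Y_i}-C(\pi)$, because costs are reproduced exactly while rewards are halved. But then combining with Step~1 gives only
\[
F(S)\;\ge\;\tfrac12\,\E{\textstyle\max_{S(\pi)}Y_i}-C(\pi)\;\ge\;\tfrac12\bigl(R(\pi)-C(\pi)\bigr)-C(\pi)\;=\;\tfrac12 R(\pi)-\tfrac32 C(\pi),
\]
which is weaker than the theorem. The surrogate pass through $Y$ costs you an extra $\tfrac12 C(\pi)$: Step~1 already ``spends'' one copy of $C(\pi)$ to cap the $X_i$'s, and the non-adaptive set spends another.

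The paper avoids the surrogate entirely and applies the independent-copy argument directly to $X$. Define, for an independent copy $Z\stackrel{d}{=}X$,
\[
\mu_Z(S,y,y')\;:=\;\E{\bigl(\textstyle\max\{Z_i:i\in \mathbb S(\pi,X\mid S,y)\setminus S\}-y'\bigr)_+},
\]
and prove $\mu_X(S,y,y')\le 2\,\mu_Z(S,y,y')$ by reverse induction on $|S|$. If $p=\pi(S,y)$ is the next box, the one-line estimate
\[
\bigl((X_p\vee Z_p)-y'\bigr)_+\;\le\;(X_p-y')_++(Z_p-y')_+
\]
together with $\E{(X_p-y')_+}=\E{(Z_p-y')_+}$ and monotonicity of $\mu$ in $y'$ yields the step; no branching-structure analysis is needed. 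Since the random set $S^\star=S(\pi,X)$ has $\E{c(S^\star)}=C(\pi)$ exactly, the probabilistic method delivers a deterministic $S$ with $\E{\max_{i\in S}X_i}-c(S)\ge\tfrac12 R(\pi)-C(\pi)$. The detour through $Y_i$ buys nothing here and is the source of your loss.
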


This result effectively reduces the problem of approximating adaptive strategies to the problem of selecting optimal non-adaptive sets. 
It should be noted, however, that in full generality of matroid constraints or precedence constraints, this could be intractable. We therefore follow an alternative approach. We show that there exists an {\em adaptive} strategy whose performance is better than that of every non-adaptive set, simultaneously.
\begin{thm} For any tree-constraint, and any further downwards-closed constraint on the set of boxes that can be opened, there exists an adaptive strategy $\hat\pi$ such that for any fixed set $S$, $\hat\pi$ performs better in expectation than non-adaptively opening $S$. Note that $\hat \pi$ does not depend on $S$.
Furthermore, when the downwards-closed constraints are given by generalized knapsack constraints, or any ``sufficiently oblivious'' matroid constraint (as defined in Section~\ref{sec:appx-algs}), the strategy $\hat \pi$ can be computed efficiently.
\end{thm}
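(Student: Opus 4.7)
Define $\hat\pi$ as the \emph{macrobox greedy} driven by the Theorem~1 thresholds. Ignoring the additional downwards-closed constraint $\mathcal F$, first compute the tree-only thresholds $z_i$ together with the induced macrobox decomposition, so that
\[\E{\left(\textstyle\max_{j\in M(i)}X_j-z_i\right)_+-\textstyle\sum_{j\in M(i)}c_j}=0,\]
where $M(i)$ denotes the macrobox entered by opening box $i$. At every decision point, $\hat\pi$ enters the macrobox of largest threshold $z_i$ that is both \emph{accessible} (all tree-ancestors already opened) and \emph{$\mathcal F$-safe} (the resulting prefix can still be extended to some feasible set in $\mathcal F$), and it terminates as soon as the running maximum observed reward exceeds the entry threshold. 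When $\mathcal F$ is trivial, this reduces exactly to the Theorem~1 optimum, so the analysis only has to handle the $\mathcal F$-safety filter.

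\textbf{Proof that $\hat\pi$ beats every non-adaptive $S$.} For a fixed feasible $S\in\mathcal F$, I would run a macrobox-by-macrobox exchange argument. The threshold identity above gives the key invariant: whenever $\hat\pi$ enters $M(i)$ at a state whose best-so-far reward $y$ satisfies $y<z_i$, the expected change in $\max X-\sum c$ is non-negative. Because $\hat\pi$ always picks the largest-$z$ accessible-and-safe macrobox while $S$ is itself accessible and safe throughout, one can couple $\hat\pi$'s trajectory to the deterministic opening of $S$ and charge, macrobox by macrobox, a non-negative gain to $\hat\pi$. The ``stop once $\max X\ge z$'' rule handles the $\max_{i\in S}X_i$ term: either $\hat\pi$ has already collected a reward exceeding what $S$ could promise, or it has consumed at least as many macroboxes as $S$, with the costs netted by the identity.

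\textbf{Efficient implementation in the two regimes.} For a generalized knapsack constraint, the $\mathcal F$-safety check reduces to a monotone budget inequality on the current prefix plus the candidate macrobox, which is polynomial to evaluate, and the argmax over $z_i$ is a single sort; the budget is oblivious to reward realizations, which is precisely what lets the greedy preserve the exchange argument above. For a sufficiently oblivious matroid constraint, the set of accessible-and-$\mathcal F$-safe next candidates itself forms a matroid (this is essentially the content of ``obliviousness''), so the largest-$z$ rule is matroid-greedy and optimality for the surrogate objective follows from the standard matroid exchange. The main obstacle I anticipate, in both regimes and in the analysis of the previous paragraph, is that $\mathcal F$ can ``slice'' a macrobox --- forcing $\hat\pi$ to abort partway through a macrobox it has already committed to --- so the argument must show that the expected utility lost to such a slice is already absorbed by the threshold identity. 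This is exactly the role of the ``sufficient obliviousness'' hypothesis on $\mathcal F$ (or the monotonicity of the knapsack slack): it guarantees that the residual feasibility of a macrobox after entry is independent of the realizations of $X$ already observed, so the marginal-revenue accounting from the identity remains valid.
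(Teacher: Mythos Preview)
Your proposal has a genuine gap. The strategy you define uses thresholds $z_i$ computed \emph{ignoring} the downwards-closed constraint $\mathcal F$, and those thresholds can be arbitrarily optimistic: $z_i$ reflects the value of playing the entire tree-only macrobox $M(i)$ to completion, which $\mathcal F$ may forbid. Concretely, take two parallel lines: line one is $(c=10,X\equiv 0)\to(c=0,X\equiv 100)$, line two is a single box $(c=1,X\equiv 5)$; impose the cardinality constraint ``open at most one box.'' The tree-only threshold of the first box is $90$, so your greedy enters that macrobox, opens the first box, and is then stuck (the second box and the other line are now infeasible), finishing at $-10$. The non-adaptive set consisting of the single box on line two earns $+4$. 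So $\hat\pi$ does not dominate every fixed $S$. Your acknowledged ``slicing'' worry is exactly this failure, and the proposed fix---that obliviousness makes residual feasibility independent of realizations---misses the point: the problem is not randomness but that the threshold identity $\E{(\max_{j\in M(i)}X_j-z_i)_+-\sum_{j\in M(i)}c_j}=0$ was derived for the unconstrained problem and simply does not hold once $\mathcal F$ truncates $M(i)$.

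The paper's argument is entirely different and does not use thresholds or macroboxes at all. It fixes a pre-order $b_1,\dotsc,b_n$ on the tree and builds $\hat\pi$ as the dynamic program
\[
\Psi(i,y,D)\ =\ \max\Big\{\,y,\ \ \Psi(\textsc{next}(i),y,D),\ \ -c_i+\E{(X_i-y)_+}+\E{\Psi(i+1,\,y\vee X_i,\,D')}\,\Big\},
\]
where $D$ is a compact summary of the constraint state and $\textsc{next}(i)$ skips the subtree rooted at $i$. ``Sufficiently oblivious'' in the paper means precisely that such a poly-size summary $D(S)$ exists and can be updated locally---it has nothing to do with matroid exchange or independence from realizations. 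The proof that $\hat\pi$ beats every fixed $S$ is then a one-line induction down the pre-order: at each index the DP takes the max over ``include $b_i$'' and ``exclude $b_i$,'' so it dominates whichever choice $S$ made. Efficiency comes from the poly-size state space $(i,y,D)$, not from any greedy or matroid argument.
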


\subsection{Related Work}
    As discussed above, the starting point for this theory is the 1979 Weitzman's paper~\cite{weitzman},which was at the time a generalization of preceding results in special cases, namely \citet{kadane1977optimal,kadane1969quiz}.
    In the following years it was highlighted the similarity with the multi-armed bandits problem, which uses the notion of Gittins index as a similar cutoff value (e.g. \citet{weber1992gittins,frostig1999four}). Indeed the reservation value of the classical \PBP is a version of Gittins index, \citet{dumitriu2003playing}.
    \citet{weiss1988branching}, and in particular \citet{keller2003branching}, deal with a similar problem, the branching bandit process.
    The branching process resembles the \PBPTree, though in that model, the process does not terminate, and the revenue is measured as an infinite-horizon discounted sum of payoffs, even if some finite horizon results are showed. The main difference with our works is that we focus on maximizing the largest reward minus the exploration costs, an objective function they cannot capture with their techniques, moreover their solution is defined with an implicit formula that becomes rapidly cumbersome as a function of the height of the tree. Interestingly enough they prove the optimality of a threshold strategy which is, in spirit, quite similar to ours.
    
    \citet{kleinberg2016descending}, borrowing from the language of finance, introduce the {\em covered call value} of a box, which is the minimum of the reservation value, and its true (random) value. They show that the expected performance of any search strategy is at most the expected covered call value of the last kept box. This inequality is tight for any method which immediately terminates search when the value of an opened box is greater than its reservation value.
    This novel point of view on \PBP started a new interest in the problem. \citet{olszewski2015more} investigate the existence of moving threshold strategies to address more general objective functions in the Pandora's setting, while \citet{doval2018whether} and \citet{beyhaghi2019pandora} analyze a setting in which a box can be chosen without paying its cost while retaining its expected reward.
    We highlight that these techniques do not extend to order-constrained settings.
    
    \citet{singla2018price,singla2018combinatorial} exploits the notion of {\em surrogate} box, an analog of the {\em covered call value}, to reduce the \PBP in the presence of downwards-closed constraints to adaptive maximization of nonnegative submodular functions, and bound the {\em adaptivity gap} of this problem, {\em i.e.} the ratio between the best adaptive solution and the best oblivious one as in \citet{gupta2017adaptivity} and later \citet{adaptivityGapProof}. Recently \citet{kleinberg2018delegated} and \citet{esfandiari2019online} studied a connection between the \PBP and another well known optimal stopping problem, termed the Prophet Inequalities.


\section{Model and Preliminaries}
\label{sec:model}
        In this section, we formally present our model, and give preliminaries.
        Recall, as the player, we adaptively open a constraint-satisfying set of boxes, paying for each one opened, and learn the (random) value of each box only after paying the cost of opening. The final payoff received is the largest value observed. This is formalized below.
    
        \paragraph{The \PBPDag.}
            We are given a set of boxes $B=\{b_1,\,\dotsc,\,b_n\}$, where $b_i$  costs $c_i$ to open, and has random payoff $X_i$, whose distribution is known. The $\{X_i\}_{i=1}^n$ are independent and need not be identically distributed. 
            
            A strategy $\pi$ is a rule which determines, at any integer time $t\geq 0$,  whether to terminate the search and, if not, which box to open next. 
            The strategy may depend on the time $t$, the values observed in the past, the structure of the problem and some extra randomness. We use equivalently the terms strategy, rule and policy.
            
            Let $S_t(\pi)$ denotes the (random) set of boxes that have been opened before time $t$ (included) by strategy $\pi$, and let $\tau_{\pi}$ be the stopping time given by the same strategy. We use the shortcut $S(\pi)$ to denote $S_{\tau_{\pi}}(\pi)$, the final set of opened boxes following strategy $\pi$.
            Given constraint-set $\mathcal F\subseteq 2^B$, $\pi$ is said to be $\mathcal F$-feasible if $S_t(\pi)\in \mathcal F$ with probability 1, for all $t$. 
            Our goal is to choose a policy $\pi^*$ which maximizes, in expectation, the following objective:
            \[
    \textstyle\E{        	\max_{i\in S(\pi^*)} X_i - \sum_{i\in S(\pi^*)} c_i}
            \]
            Such strategies are called optimal.
            
        \paragraph{Threshold Strategies.}
            A rule $\pi$ is said to be a threshold strategy if it pre-computes a collection of {\em threshold values}, and greedily opens the boxes following these values, stopping when the amount earned is greater than the threshold of all remaining legal moves.
            Formally, the strategy is defined by a {\em threshold function} $z:B\to \R$ and works as in Algorithm \ref{alg:threshold} below.

            \begin{algorithm}
            \caption{Threshold strategy}
                \KwData{Distributions of the random rewards, box costs and a threshold function $z:B\to \R$}
                $S_0 \gets \emptyset$, $y \gets 0$,  $t \gets 0$\\
                \While{$y<\max \{z(b) | b \in B \backslash S_t \text{ and } \{b\} \cup S_t \in \F \}$}{
                Let $\hat b \in \argmax\{z(b) | b \in B \backslash S_t \text{ and } \{b\} \cup S_t \in \F \}$, tie-breaking arbitrarily\\
                Open box $\hat b$, observe reward $\hat X$ and pay cost $\hat c$\\
                $\ S_{t+1} \leftarrow S_t \cup \{\hat b\}, \ y \leftarrow \max\{y,\hat X\}, \ t \leftarrow t+1$\\
                }
            \label{alg:threshold}
            \end{algorithm}
            
            Observe that, given some consistent tie-breaking rule, the order of exploration is fixed, as the next box to consider only depends on the reservation values.
            We remark that any threshold function is defined \textit{a priori}, $i.e.$ it does not depend on the observed rewards, but only on the costs, the random distributions, and $\F$. 
            The following facts follow by definition:
            \begin{claim}
                Let $\pi$ be a threshold 
                strategy,
                where ties in the thresholds are solved arbitrarily but consistently.
                Then the following hold true:
                \begin{enumerate}
                    \item \textbf{Fixed order:} Following $\pi$, \ $\Pr{b_i \text{ is opened before } b_j}>0 \implies \Pr{b_j \text{ is opened before } b_i}=~0$, for all $i\neq j$.
                    \item \textbf{Efficiency:} If the threshold function $z$ is efficiently computable, then so is $\pi$.
                \end{enumerate}
            \end{claim}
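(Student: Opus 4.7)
The claim has two independent parts; I would handle them separately and both proofs are short.

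\textbf{Fixed order.} The plan is to decouple the two sources of randomness in the strategy: the \emph{stopping time} depends on the observed rewards $\hat X$, but the \emph{identity of the next box to open} does not. I would show by induction on $t$ that there is a single deterministic sequence $b_{i_1},b_{i_2},\dotsc,b_{i_n}$ such that, on every realization, the set $S_t$ is a prefix $\{b_{i_1},\dotsc,b_{i_t}\}$ of this sequence. The base case $S_0=\emptyset$ is trivial. For the inductive step, assume $S_t=\{b_{i_1},\dotsc,b_{i_t}\}$ on every realization where the loop has not terminated by step $t$. Then the set $\{b \in B\setminus S_t : \{b\}\cup S_t\in \F\}$ is a deterministic function of $S_t$, and the threshold function $z$ is fixed a priori (it depends on the distributions, costs, and $\F$, not on observed rewards). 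With a consistent tie-breaking rule, the argmax in the while loop of Algorithm~\ref{alg:threshold} selects a unique box $b_{i_{t+1}}$ that is the same across all realizations that reach step $t+1$. Hence $S_{t+1}=\{b_{i_1},\dotsc,b_{i_{t+1}}\}$, closing the induction. Now if $b_i=b_{i_a}$ and $b_j=b_{i_b}$ with $a<b$ in this sequence, then on any realization in which both are opened, $b_i$ is opened first; on realizations in which $b_j$ is opened at all, $b_i$ was already opened earlier. Therefore $\Pr{b_j \text{ opened before } b_i}=0$, while if ever $\Pr{b_i \text{ opened before } b_j}>0$ we are in the case $a<b$. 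The other direction follows symmetrically.

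\textbf{Efficiency.} The while loop runs at most $n$ times because $|S_t|$ strictly increases on each iteration and $S_t\subseteq B$. Each iteration performs three operations: (i) identify the feasible candidates $\{b\in B\setminus S_t : \{b\}\cup S_t\in\F\}$, which is $n$ membership queries to $\F$; (ii) evaluate $z$ on each candidate; and (iii) take an argmax and update $S_t,y,t$. Assuming a polynomial-time feasibility oracle for $\F$ and polynomial-time computability of $z$ (the hypothesis of the claim), each iteration is polynomial, so the whole strategy is polynomial.

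The only real subtlety is the word ``consistently'' in the tie-breaking rule: ties must be broken in a way that depends only on the identity of the tied boxes and $S_t$, not on observed rewards or auxiliary randomness that could differ across realizations; otherwise the inductive step above fails and two realizations could diverge at a tie. This is the one condition I would make explicit at the start of the fixed-order proof.
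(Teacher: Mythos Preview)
Your proposal is correct and is essentially what the paper intends: the paper does not give a proof but simply states that ``the following facts follow by definition,'' and your argument is exactly the unpacking of that definition---the next box depends only on $S_t$, $z$, and the tie-breaking rule, none of which depend on the realized rewards, so the exploration order is deterministic up to early termination. One minor imprecision: you write the deterministic sequence as $b_{i_1},\dotsc,b_{i_n}$ of full length $n$, but under general $\F$ the sequence may be shorter if feasible extensions run out; this does not affect the argument, since any box opened with positive probability must appear somewhere in the sequence.
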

        \paragraph{Order Constraints}
            In this paper, we focus on {\em order constraints}, where some boxes are required to be opened after others.
            Such constraints can be represented by a rooted tree $T$ called precedence tree, whose nodes are the boxes. A box may only be opened if its parent has already been explored. Without loss of generality we can assume that $T$ is connected, {\em i.e.} it is actually a tree, and that there is a unique node $r \in B$, which is the starting box --- if not, it suffices to add a dummy root box with no cost and no reward. 
            Formally, then, given the precedence tree $T$, the feasible sets $\mathcal F_T$ are:
            \[
        	    \mathcal F_T:= \{ S\subseteq B \,|\, \forall u\in S \setminus \{r\},\ \textsc{parent}(u) \in S \}
            \]
        
        \paragraph{Prefix closed Constraints} Order constraints may be seen as a special case of the more general {\em prefix-closed} constraints, which simply assert that for any legal sequence of moves, any truncation of this sequence is also legal. Formally, given a set of boxes $B$ and a set $\Ci$ of possible orders of exploration, we say that $\Ci$ is prefix closed if for every $C \in \Ci$, every prefix of $C$ is also in $\Ci$. 
        Note that the order constraints defined above are a special case of this. 
        Furthermore, intersecting any combination of order and downwards closed constraints results in some prefix closed family. 

        \paragraph{Distributional Assumptions} 
            As discussed in the beginning of Section~\ref{sec:overview}, it is not restrictive to assume that the random variables $\{X_i\}_{i = 1,\dots,n}$ are discrete, and are supported on $s<\infty$ values. When we say an algorithm runs in polynomial time, we mean polynomial in~$s$ and~$n$. 

        \paragraph{Preliminaries}
            Before proceeding with the details of the analysis, we would like to note that the \PBPDag admits a na\"ive, albeit exponential-time solution: it suffices to solve a dynamic program whose states are all pairs $(S,y)$ where $S\subseteq B$ is a set of boxes, and $y\in \R$ is the max value observed. 
            
            The literature on Markov Decision Processes (e.g. \cite{puterman1994markov}) allows us to fix basic properties of optimal strategies: there exists an optimal strategy $\pi^*$  which is a Markovian policy mapping states to actions, {\em i.e.} the optimal next box to open is deterministic function of the state $(S,y)$. 
    
        \paragraph{Notation} 
            In the following we use interchangeably $\max(a,b)$ or $a \vee b$ to denote the largest between two reals $a$ and $b$. For the smallest we use $\min(a,b)$ or $a \wedge b$. As a further simplification $(a)_+ := \max\{a,0\}$. We use the following simple equality repeatedly: $a \vee b - b = (a-b)_+.$


    \section{Optimal Search on Tree Constraints}
    
        In this section, we present and analyze an optimal-in-expectation search procedure for the \PBPTree.
        The classical result of Weitzman~\cite{weitzman} shows that the unconstrained problem may be solved by a simple threshold strategy as discussed above. Formally, 
        
        \begin{definition}[Reservation Value and Pandora's Rule]
        \label{def:reservation}
            Given a box $b$ with cost $c>0$ and nonnegative random reward $X$, we define the reservation value $\zeta$ of $b$ as the smallest solution to
            $
                c=\E{\left( X-\zeta\right)_+}.
            $
            It can be shown that if $X$ has finite mean, then the reservation value is well defined.
            The threshold strategy using the reservation values as thresholds is termed {\em Pandora's Rule}.
        \end{definition}
        The power of this strategy is that the reservation value depends only on the single box, allowing us to consider each box separately, leaving the problem dramatically more tractable. 
        In Sections~\ref{sec:one-line-opt} and~\ref{sec:multi-line-opt}, we present a solution to the \PBPDag given by one directed line, and a collection of disjoint, directed lines, respectively.
        In Section~\ref{sec:tree-constraints-opt}, we show that, in fact, solving the problem for generalized, rooted tree constraints follows immediately as a corollary, and as such, the heart of the technical contribution lies in \Cref{sec:multi-line-opt}. 
        
        \subsection{\PBP on a single Line}\label{sec:one-line-opt}
        We introduce here the simplest order constraint, given by an ordered path. 
        This constraint may seem trivial at first, since the order of exploration is fixed, and it suffices to determine the stopping time, but it illustrates the main difficulty of order constraints: the intrinsic value of a box is not given \textit{only} by its cost and its random reward, but also by the other boxes that are made available after its opening. Consider, for example, the line consisting of one box with cost but no reward, followed by another with reward but no cost. In this sense, a na\"ive threshold strategy in the sense of Weitzman's result does not immediately suffice.
        
        As mentioned, solving the problem on the single line is not of great consequence. However, the concepts introduced in its solution are very informative for the following, so we present them separately for clarity of exposition.

        Let $\L=[b_1, b_2, b_3,$ $ ..., b_n]$ be the ordered set of boxes, where for all $i$, $b_i$ costs $c_i$ to open, and gives random reward $X_i$, as usual. 
        As an additional constraint we have that the boxes must be opened in order of their indices: 
        box $b_i$ can be opened only if it is the root or box $b_{i-1}$ has already been opened. 
        After having opened the first $i$ boxes and collected some reward, it is a simple exercise of Dynamic Programming to determine whether to open the $(i+1)$-th box and proceed optimally, or to terminate.
        The DP had $O(ns)$ states for $n$ boxes with reward supported on $s$ values.
        
        However, the goal of this section is to illustrate structural properties which we will use in the general tree setting.
        It is clear that the set of possible strategies in this simple setting coincides with the set of all (random) stopping times $\tau$, with respect to the filtration given by $X_1, X_2, \dots, X_n$. 
        For any $x \in \mathbb{R}_+$, $\tau$ and $i=0,1,\dots,n,n+1$ we denote $(x,i)$ as the state in which box $b_i$ is the next accessible box and $x$ is the largest reward uncovered so far, and $\tau(x,i)$ as the (random) stopping time conditioned on being in that state, i.e. conditioned on the events $\tau \geq i-1$ and $\max_{j=1}^{i-1}X_j = x$. 
        
        Before, and in what follows, we assume sum and the $\max$ operations over empty sets to have value $0$.
        We can define the expected future reward following $\tau$, starting in state $(x,i)$, as:
        \[
        \phi^{\tau}(x,i)\ :=\ \E{\max\left\{x,\ \textstyle\max_{j=i}^{\tau(x,i)}X_j\right\}-\textstyle\sum_{j=i}^{\tau(x,i)} c_j},
        \]        where, by convention, we set $\phi^\tau(x,i)$ to be 0 when $\tau$ is ill-defined,
        {\em i.e.} the event $\tau(x,i)$ is conditioning on occurs with probability 0.
        In addition to that we define $\Phi(x,i) = \max_{\tau} \phi^{\tau}(x,i).$
        
        The objective is to choose a $\tau^*$ which maximizes $\phi^{\tau}(0,0)$, corresponding to $\Phi(0,0)$. 
        By the remarks in the Preliminaries, we restrict ourselves to stopping times which are deterministic functions of the $X_i$'s, i.e. the decision to stop depends deterministically only on the structure of the problem and the realizations of the rewards experienced so far.
        
         \begin{remark}
        It is natural to ask whether a fully-deterministic, i.e. non adaptive, stopping time is a valid strategy. It turns out that such an approach may be arbitrarily worse than an adaptive stopping time.
        We provide an example in Appendix~\ref{app:counterexamples}, which resembles ones found in~\cite{singla2018combinatorial}, and the extended version of~\cite{singla2018price}.  

    \end{remark}
        
        The notion of conditional stopping time defined above allows us to give the following definition, which is analogous to Definition \ref{def:reservation}:
        \begin{definition}
        \label{def:reservationLine}
            Let $\L=[b_1,b_2\dots,b_n]$ be a line of $n $ boxes, then for every $i=1, \dots, n$ we can define the generalized reservation value of box $b_i$, denoted $z_i$, as the smallest solution to
            \begin{equation}
            \label{eq:reservationLine}
                \E{\left(\textstyle \max_{j=i}^{\tau^*(z_i,i)}X_j-z_i\right)_+-\textstyle\sum_{j=i}^{\tau^*(z_i,i)}c_j}=0
            \end{equation}
        where $\tau^*(x,k)$ is an optimal random stopping time given that the largest reward sampled in the past has been $x$ and the player has just opened box $b_{k-1}$, or nothing if $k = 1$.
        \end{definition}
        Whereas the $\zeta$ value in Definition~\ref{def:reservation} was effectively the value collected in the past for which we were indifferent between opening a box or not, the $z_i$ value in this definition is the past collected value for which we are indifference between proceeding (optimally) along the line of boxes or not. 
        The following claim ensures that this is well-defined and its proof can be found in Appendix \ref{app:proofs}. \\[-2ex]
        \begin{claim}
        \label{cl:reservazionLine}
            Definition \ref{def:reservationLine} is well posed, in that the smallest solution of $(\ref{eq:reservationLine})$ exists and does not depend on the choice of $\tau^*$. Also, if $z_i>0$, some optimal stopping time $\tau^*(z_i,i)$ does not stop at $i-1$. 
            Finally, $z_i$ is  the value for which we are indifferent between stopping and proceeding optimally.
        \end{claim}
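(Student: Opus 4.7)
The plan is to first rewrite~\eqref{eq:reservationLine} in a form that decouples it from the specific choice of $\tau^*$. Using the identity $\max\{a,b\} - a = (b-a)_+$ noted in the Preliminaries, one sees that for any optimal $\tau^*$,
\[
\Phi(z_i, i) - z_i \ =\ \E{\left(\max_{j=i}^{\tau^*(z_i, i)} X_j - z_i\right)_+ - \sum_{j=i}^{\tau^*(z_i, i)} c_j}.
\]
Hence \eqref{eq:reservationLine} is equivalent to $\Phi(z_i, i) = z_i$. Since every optimal $\tau^*$ attains the same value $\Phi(z_i, i)$, the right-hand side does not depend on which optimal $\tau^*$ is chosen, settling the ``independence from $\tau^*$'' part.

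Next I would prove that $g(x) := \Phi(x, i) - x$ admits a smallest zero. Four properties suffice. First, $g$ is nonnegative, since the trivial ``stop at $i-1$'' rule achieves $\phi^\tau(x,i) = x$. Second, $g$ is non-increasing in $x$: for any fixed $\tau$ the quantity $\phi^\tau(x,i) - x = \E{(\max_{j=i}^{\tau} X_j - x)_+ - \sum_{j=i}^{\tau} c_j}$ is non-increasing in $x$, and taking a pointwise supremum preserves this via a standard swap-strategy argument. Third, $g$ is continuous: under the finite-support assumption there are only finitely many Markov strategies, so $\Phi(\cdot, i)$ is the upper envelope of finitely many piecewise-linear functions of $x$. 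Fourth, $g$ vanishes once $x$ exceeds the largest value in the supports of $X_i,\dots,X_n$, since no further opening can then improve the running maximum while every opening incurs cost. Combining these, $\{x : g(x) = 0\}$ is non-empty and closed, hence contains a minimum $z_i$.

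For the second claim, assume $z_i > 0$. Then for every $x < z_i$ we have $g(x) > 0$, so the ``stop at $i-1$'' rule is strictly suboptimal at $(x,i)$ and any optimal $\tau^*(x,i)$ must open $b_i$ with positive probability. Since there are only finitely many Markov strategies, pigeonholing over a sequence $x \uparrow z_i$ produces a single rule $\tau'$ that is optimal on an entire left-neighborhood $(z_i - \epsilon, z_i)$. Continuity of $\phi^{\tau'}$ in $x$ then gives $\phi^{\tau'}(z_i, i) = \lim_{x \uparrow z_i} \Phi(x, i) = z_i$, so this $\tau'$ is also optimal at $(z_i, i)$, and by its construction it does not stop at $i-1$.

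The third claim is then immediate: at state $(z_i, i)$, stopping yields exactly $z_i$, while proceeding optimally yields $\Phi(z_i, i) = z_i$, so the player is indifferent between the two options. I expect the trickiest step to be the limit argument in the second claim, since in principle a discontinuous jump in the identity of the optimal rule as $x$ crosses $z_i$ could prevent passing to the limit; it is precisely the finiteness of the Markov strategy set, inherited from the finite support of the $X_j$'s, that ensures some single optimal rule survives on a left-neighborhood of $z_i$ and remains optimal at the boundary.
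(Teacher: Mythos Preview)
Your proposal is correct and follows essentially the same route as the paper: both rewrite \eqref{eq:reservationLine} as $\Phi(x,i)=x$, show $g(x)=\Phi(x,i)-x$ is nonnegative, non-increasing, continuous, and eventually zero to obtain the smallest zero $z_i$, and then pass an optimal ``open $b_i$'' rule from $x<z_i$ to $z_i$ by continuity. The only cosmetic differences are that the paper establishes continuity via a $1$-Lipschitz bound on $\Phi(\cdot,i)$ (which does not rely on finite support) rather than your finite-strategy envelope argument, and phrases the second claim as a Lipschitz-based contradiction rather than a pigeonhole-plus-limit.
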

        
        We highlight here that, as $z_i$ is defined, the optimal strategy $\tau^*(z_i,i)$ is ambiguous: we are indifferent between stopping ($\tau^*=i-1$), and proceeding ($\tau^*(z_i,i)=\tau^*(z_i\vee X_i,i+1)$). 
        By convention we will always refer to the latter.
        A brief calculation confirms that this new definition contains as a special case the classical reservation value, viewing single boxes as length-1 lines.
        We will henceforth omit the term ``generalized'' in reference to reservation values, when clear from context.
        We call the threshold strategy associated to these reservation values Generalized Pandora's Rule.  
        
        \begin{thm}\label{thm:general-pandora-opt-one-line}
             The Generalized Pandora's Rule for the Line is optimal and can be computed in polynomial time and space.
        \end{thm}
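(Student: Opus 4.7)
The plan is to establish optimality by backward induction on the box index, proving along the way a monotonicity property of the continuation value, and then to describe a standard dynamic program for the thresholds. Let $\Phi(y,i)$ denote the value function, satisfying the Bellman recursion $\Phi(y,i) = \max\{y,\, \E{\Phi(y \vee X_i, i+1)} - c_i\}$ with $\Phi(y, n+1) = y$, and introduce the surplus $G(y,i) := \Phi(y,i) - y$, which is non-negative since stopping is always an option. The first step will be a monotonicity lemma: $G(\cdot, i)$ is non-increasing in $y$ for every $i$, proved by backward induction on $i$. Using $a \vee b - b = (a-b)_+$ the recursion rewrites as $G(y,i) = \max\{0,\, \E{(X_i - y)_+ + G(y \vee X_i, i+1)} - c_i\}$; the term $(X_i - y)_+$ is pointwise non-increasing in $y$, and so is $G(y \vee X_i, i+1)$ by the inductive hypothesis combined with the monotonicity of $y \vee X_i$ in $y$, and both properties are preserved under expectation and the outer max with $0$.

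The second step identifies $z_i$ as the boundary between the continue- and stop-regions. Evaluating the optimal continuation from state $(z_i, i)$ gives surplus $\E{\max(z_i,\, \max_{j=i}^{\tau^*} X_j) - z_i - \sum_{j=i}^{\tau^*} c_j} = \E{(\max_{j=i}^{\tau^*} X_j - z_i)_+ - \sum_{j=i}^{\tau^*} c_j}$, which equals zero by the defining equation of $z_i$; hence $G(z_i, i) = 0$. Combined with $G \geq 0$ and the monotonicity just established, the fact that $z_i$ is the \emph{smallest} solution forces $G(y, i) = 0$ precisely for $y \geq z_i$ and $G(y, i) > 0$ for $y < z_i$ (otherwise $G$ would vanish on an interval dipping below $z_i$, contradicting minimality). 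Consequently the Bellman maximum is attained by the stop branch iff $y \geq z_i$, and by the strictly-dominant continue branch when $y < z_i$, after which the state becomes $(y \vee X_i, i+1)$ and the same characterization applies recursively. This is exactly Algorithm~\ref{alg:threshold} instantiated with thresholds $z_i$, proving optimality.

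For efficient computation, since each $X_j$ has discrete support of size at most $s$, the running maximum $y$ takes at most $O(ns)$ distinct values across reachable states, so the value function needs to be tabulated at only polynomially many arguments. A right-to-left dynamic program fills in $\Phi(\cdot, i)$ at these arguments via the Bellman recursion in $O(s)$ expectation evaluations per entry, so both time and space are polynomial in $n$ and $s$; $z_i$ is then recovered as the smallest tabulated $y$ with $\Phi(y, i) = y$, using linear interpolation between adjacent grid breakpoints when $z_i$ falls strictly between them (valid because $G(\cdot, i)$ is piecewise linear on the grid). The hardest part will be the monotonicity lemma: the recursion couples the current-round gain $(X_i - y)_+$ with the recursive future gain $G(y \vee X_i, i+1)$, and a clean induction requires framing the invariant in terms of the surplus $\Phi - y$ rather than $\Phi$ itself, so that the $\vee$ in the state transition interacts correctly with the decreasing continuation value.
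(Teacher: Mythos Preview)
Your proof is correct and follows essentially the same approach as the paper: establish that the surplus $G(y,i)=\Phi(y,i)-y$ is non-increasing in $y$, identify $z_i$ as its smallest zero so that the continue/stop regions are exactly $\{y<z_i\}$ and $\{y\ge z_i\}$, and read off the thresholds from a right-to-left dynamic program on $\Phi$. The paper packages the monotonicity step into a separate claim (Claim~\ref{cl:reservazionLine}, proved in the appendix via a ``substitute the optimal stopping time for $b$ into the problem starting at $a$'' argument), whereas you prove it directly by backward induction on the Bellman recursion; both arguments are standard and yield the same conclusion.
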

        \begin{proof}
            It follows by definition, and by Claim~\ref{cl:reservazionLine}, that it is in our interest to proceed if the largest value seen is less than $z_i$, and to stop, if the largest value seen is greater. Thus, any deviation from this threshold strategy is sub-optimal. 
            Furthermore, $\Phi(x,i)$ may be computed by a simple dynamic program solved in decreasing order of $i$. The reservation price of box $i$ is the smallest point in the column for $i$ where $\Phi(x,i)=x$. 
        \end{proof}
        
        We can define the procedure $\textsc{computeThreshold}(b,\L)$ as outputting the exact reservation price of $b$ if it were added as a prefix to $\L$. 
        This can be done in polynomial space and time by simply referring to the dynamic programming table for $\Phi$ and following Definition \ref{def:reservationLine}.
        This simple function plays a crucial role in the design of the final algorithm for the Tree constrained case. 
        
        We conclude this section by giving some properties of the reservation values:
        \begin{claim}
        \label{prop01}
            Given a line $\L=[b_1, \dots, b_n ]$ the following statements hold true for every $i=1, \dots,n$:
            \begin{enumerate}
                \item $z_i$ can only increase if something is added at the end of the line $\L$;
                \item For every $i$ let $d(i)$ be the $\min\{t \geq i|z_{t+1}<z_i \text{ or t = n}\}$, then $z_i$ depends only on the prefix $[b_i, \dots, b_{d(i)}]$. If $d(i)=i$, then $z_i$ depends only on $b_i,$ i.e. $z_i = \zeta_i$.
                \item If $ z_i < z_j$ for all $j>i$, then the optimal stopping time $\tau^*(z_i,i)$ given by the Generalized Pandora's Rule does not depend on $z_i$.
                In particular $\tau^*(z_i,i)=\tau^*(y,i)$ for all $y \in [0,z_i].$ 
            \end{enumerate}
        \end{claim}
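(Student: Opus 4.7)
The plan is to treat the three parts separately, relying on the monotonicity properties of $\Phi(\cdot,i)$ and on the threshold characterization of the optimal strategy given by \Cref{thm:general-pandora-opt-one-line}. As a preliminary step, I would observe that $\Phi(y,i)$ is nondecreasing in $y$ while $\Phi(y,i)-y$ is nonincreasing in $y$: the former because always stopping gives at least $y$, the latter by re-using the strategy that is optimal at $(y',i)$ inside $(y,i)$ and noting that the difference $\E{\max(y',\cdot)-\max(y,\cdot)}$ is at most $y'-y$. Combined with \Cref{cl:reservazionLine}, this identifies $z_i$ as $\inf\{y\geq 0 : \Phi(y,i)=y\}$, which is the operational form I will use throughout.

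For Part 1, I would compare $\Phi$ with $\tilde\Phi$, where the tilde denotes the line extended by appending boxes. A simple coupling argument — any strategy on $\L$ lifts to a feasible strategy on the extended line that ignores the new boxes — gives $\tilde\Phi(y,i)\geq \Phi(y,i)$ pointwise. In particular the set $\{y:\tilde\Phi(y,i)=y\}$ lies weakly to the right of $\{y:\Phi(y,i)=y\}$, so its infimum cannot be smaller, yielding $\tilde z_i\geq z_i$.

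For Part 2, I would proceed by reverse induction on $i$, from $n$ down to $1$. In the base case $i=n$ we have $d(n)=n$, the stopping time in \Cref{eq:reservationLine} is forced to equal $n$, and the equation reduces to Weitzman's $c_n=\E{(X_n-z_n)_+}$. For the inductive step, the key observation is that the optimal strategy $\tau^*(z_i,i)$ never opens a box past $b_{d(i)}$: the past max starts at $z_i$ and can only grow, while the threshold at box $d(i)+1$ (when it exists) is $z_{d(i)+1}<z_i$, so \Cref{thm:general-pandora-opt-one-line} forces an immediate stop there. Moreover, for every $k$ with $i<k\leq d(i)$, a short check gives $d(k)\leq d(i)$ — because $z_k\geq z_i$ and $z_{d(i)+1}<z_i\leq z_k$ — so the inductive hypothesis confines the dependence of $z_k$ to boxes within $[b_k,b_{d(i)}]$. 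Consequently the whole right-hand side of \Cref{eq:reservationLine} is a function of $[b_i,\ldots,b_{d(i)}]$ only, yielding the claim. In the special case $d(i)=i$, the policy opens only $b_i$ and the equation collapses to Weitzman's, giving $z_i=\zeta_i$.

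Finally, Part 3 will follow from a direct analysis of the Generalized Pandora's Rule. Under the hypothesis $z_i<z_j$ for every $j>i$, the rule continues at box $b_{k+1}$ if and only if $\max(y,\max_{j=i}^k X_j)<z_{k+1}$; since $y\leq z_i<z_{k+1}$, this is equivalent to $\max_{j=i}^k X_j<z_{k+1}$, a condition independent of $y$. Iterating across $k$, the stopping time is the same $\sigma(X_i,\ldots,X_n)$-measurable random variable for every $y\in[0,z_i]$. I expect Part 2 to be the main obstacle, since its recursive flavor risks circularity; the reverse induction paired with the clean inequality $d(k)\leq d(i)$ is what keeps the argument well-founded.
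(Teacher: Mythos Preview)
Your proposal is correct and follows essentially the same approach as the paper: both rest on the single observation that the optimal play starting from value $y$ halts before reaching any box $j$ with $z_j<y$, which immediately yields Parts 2 and 3, while Part 1 is the fixed-point/monotonicity argument you give. Your treatment is in fact more careful than the paper's two-line proof---the reverse induction and the check that $d(k)\leq d(i)$ make explicit what the paper leaves implicit.
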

        \begin{proof}
            The first property derives from the fact that $z_i$ is a fixed point of $\Phi(\cdot,i)$, and hence if something is added at the end of the line the expected revenue can only increase. For the other two claims, it is sufficient to observe that if an optimal play starts with some $y$, then $\tau^*(y,i)<j$ for all $j>i$ such that the reservation value $z_j<y.$
        \end{proof}
 
    \subsection{\PBP on a Union of Lines}
    \label{sec:multi-line-opt}
        We wish to generalize to a union of disjoint lines. 
        Observe that this setting contains both the classical \PBP, and the single-line-constrained case as sub-problems. In essence, it captures both the breadth of exploration from the unconstrained setting, and the dependence on depth of the intrinsic value of a box of the single line case.
        
            Formally, we have $k$ paths $\L_1,\,\dotsc\L_k$, where path $\L_i$ consists of boxes $b_1^i,\,\dotsc,\,b_{n_i}^i$ which can only be opened in increasing order of subscript. 
            Box $b_j^i$ costs $c_j^i$ to open, (generalized) reservation value $z_j^i$, and gives random reward $X_j^i$, following a known distribution. 
            The reservation values are defined as in the previous section, taking only the line that the box belongs to as context, $i.e.$ for any box $b_j^i$ in $\L_i$, the value $z_j^i$ is computed as if only $\L_i$ existed. 
            Furthermore, by Claim \ref{prop01}, $z_j^i$ depends only on $[b_{j}^i, \dots, b_{d^i(j)}^i] \subseteq \L_i$, where $d^i(j)$ is the equivalent of $d(j)$ for line $i$.
            Since one must consider the interactions between the different lines, the dimension of the na\"ive dynamic program is exponential in the number of paths, so we need to be more clever. 
            Surprisingly, we prove that the optimal strategy is still a threshold strategy, and the $z_j^i$'s are exactly the correct thresholds.
            The rest of this section is a proof of this fact.
            The main effort in the proof is indeed to find a way to \textit{decouple} the first box of a line from the rest 
            of that line.
            
            \begin{thm}\label{main-thm}
                The Generalized Pandora's Rule is optimal for the \PBP on Union of Lines and can be implemented in polynomial time and space.
            \end{thm}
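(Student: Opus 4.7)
The plan is to establish optimality by grouping the boxes of each line into \emph{macroboxes} and reducing the problem to Weitzman's classical rule applied to these macroboxes. For each line $\L_i$, a macrobox is a maximal run $[b_j^i, b_{j+1}^i, \ldots, b_{d^i(j)}^i]$ whose leader $b_j^i$ has the smallest generalized reservation value in the run. By Claim~\ref{prop01}(2), the leader's value $z_j^i$ depends only on its macrobox, and by Claim~\ref{prop01}(3), once we enter this macrobox with observed maximum $y \le z_j^i$, the single-line optimal stopping rule $\tau^*$ is a deterministic prescription that is independent of $y$. Hence, viewed on line $\L_i$ in isolation, entering a macrobox is an atomic composite experiment whose outcome distribution (total cost paid, maximum value uncovered, stopping point) is completely determined by the realizations inside that macrobox.

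The first step is to set up the macrobox-reduced problem: replace each line $\L_i$ by an ordered sequence of macroboxes $M_1^i, M_2^i, \ldots$ whose leader reservation values are strictly decreasing by definition of $d^i$. Each macrobox $M$ inherits an aggregate reservation value $z^M$ (its leader's $z$), and by Definition~\ref{def:reservationLine} this $z^M$ is exactly the past-max value at which entering $M$ and playing its internal optimal rule has zero expected net gain. In particular, for any observed max $y$, committing to $M$ contributes positive expected value if and only if $y < z^M$, which is precisely Weitzman's criterion applied to a single ``surrogate box'' with reservation value $z^M$ and a random payoff equal to the maximum revealed while playing $M$ optimally.

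The hard part is establishing \emph{cross-line macrobox atomicity}: an optimal strategy, upon opening the leader of some macrobox $M$, either completes $M$ following the single-line optimal rule or stops, but never suspends $M$ to start a macrobox on a different line. The intuition is that $M$ is entered only when its leader attains the global maximum reservation value over all currently available macrobox leaders; by construction every remaining box inside $M$ has \emph{strictly higher} reservation value than the leader, and therefore strictly higher than any competing macrobox leader on another line. To make this precise I would use an exchange argument on policies: given any policy that, after opening a prefix $b_j^i, \ldots, b_k^i$ of $M$, switches to open a macrobox $M'$ on a different line, I swap the remainder of $M$ ahead of $M'$. Independence of rewards across lines together with a covered-call lower bound lifted from the single-line result (Theorem~\ref{thm:general-pandora-opt-one-line}) shows that the swap cannot decrease the expected objective; iterating produces an optimal strategy that respects macrobox atomicity.

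Once atomicity is in place, the proof collapses to a Weitzman-style exchange argument applied to the macroboxes themselves: among the currently available macrobox leaders across all $k$ lines, open the one with the highest $z^M$, and stop once the observed max dominates every remaining $z^M$; any deviation can be locally improved by swapping two adjacent macroboxes using the single-box Weitzman inequality applied to their surrogate values. Since the Generalized Pandora's Rule, run over all lines, is exactly this greedy choice (by Claim~\ref{prop01}(2) the value $z_j^i$ is the correct macrobox value, computed on $\L_i$ alone), it is optimal. For efficiency, each $z_j^i$ is obtained from the single-line dynamic program of Theorem~\ref{thm:general-pandora-opt-one-line} in polynomial time and space, and the threshold strategy itself is a priority-queue greedy over at most $n$ macroboxes.
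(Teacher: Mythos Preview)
Your high-level plan---partition each line into macroboxes, argue that an optimal policy treats macroboxes atomically, and then run a Weitzman-style exchange on the macroboxes---is precisely the structure of the paper's proof. The gap is in how you justify the exchange step. You appeal to ``the single-box Weitzman inequality applied to their surrogate values'' and to ``a covered-call lower bound lifted from the single-line result,'' but neither tool is available in the form you need. A macrobox has \emph{random} cost (how much you pay depends on where the internal stopping time lands), so Weitzman's classical swap argument, which relies on deterministic cost, does not apply verbatim; and the covered-call / surrogate technique of Kleinberg--Waggoner--Weyl is explicitly noted in the paper's related-work discussion not to extend to order-constrained settings. Theorem~\ref{thm:general-pandora-opt-one-line} gives you the single-line optimum and the values $z_j^i$, but it does not hand you a covered-call inequality you can lift across lines.

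What the paper actually does is carry out the macrobox swap by hand. The proof is an induction on the number of unopened boxes; at the inductive step it isolates two competing macroboxes $\P$ (the prefix of the ``correct'' line with reservation values at least $z_1^1$) and $\Q$ (the analogous prefix of the deviating line), introduces their random aggregate rewards $X_\P, X_\Q$ and random costs $c_\P, c_\Q$, and compares Strategy~A (enter $\P$ first) against Strategy~B (enter $\Q$ first) by explicit algebra with indicator variables $\pi_\P = \ind{X_\P > z_\P}$, $\lambda_\P = \ind{z_\P > X_\P > z_S}$, and so on. The identities $\E{c_\P} = \E{(X_\P - z_\P)\pi_\P}$ and $\E{c_\Q} \ge \E{(X_\Q - z_\Q)(\pi_\Q + \lambda_\Q)}$, coming straight from Definition~\ref{def:reservationLine}, are what replace the deterministic-cost step in Weitzman's argument, and the induction hypothesis is what supplies atomicity (once one box is opened, the remaining instance is smaller and the rule is optimal by induction, so the deviating strategy really does finish $\Q$ before touching $\P$). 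There is also a separate Case~II, where the deviating leader already satisfies $z_1^i \le y$, handled by a cost-shifting trick that reduces back to Case~I. Your proposal needs to supply this computation or an equivalent; as written, the swap step is asserted rather than proved.
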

            \begin{proof}
                As the reservation values are identical to the previous subsection, algorithmic results follow immediately.
                It remains to show optimality, which we will do by induction on the number of boxes yet to open.
                
                If there is only one box remaining, then this is a special case of the unconstrained \PBP and we know that it is optimal to follow the Generalized Pandora's Rule since it coincides with the Pandora's Rule.
                For the induction step, without loss of generality, we may re-label the sequences such that the first box in every line is labelled 1, and $z_1^1\geq z_1^2\geq \dotsm \geq z_1^k$.
                There are three actions to consider: 
                (1)~stopping, (2)~opening $b_1^1$ first, and (3)~opening $b_1^i$ first for some $i$ such that $z_1^1>z_1^i$. 
                We remark that the decision will depend on both the boxes to open and the largest value seen in the past, denoted $y$.
                
                \medskip
                 We begin by showing that stopping is optimal if and only if $y \geq z_1^1$. If $y<z_1^1$, we know that even the suboptimal strategy of opening box $b_1^1$ and playing only on $\L_1$ ignoring other lines is better than stopping. 
                If $y\geq z_1^1$ and we open $b_1^i$ for any $i$, then by induction, the optimal strategy is to go on exploring $\L_i$ without the possibility of changing line; as the $z_1^j$'s are too small.
                This again contradicts the $k=1$ case.
                \medskip
                
                Suppose, then, that $y< z_1^1$, and that we decide to open $b_1^i$, for $i>1$. 
                \smallskip
                
                \noindent\textbf{Case I:} $z_1^1\geq \dotsm\geq z_1^i> y$. 
                By induction, after opening $b_1^i$, the optimal strategy is to continue the exploration along $\L_i$ until the reservation value $z_j^i$ becomes less than $z_1^1$ (or the reward exceeds the next reservation value), then to go along $\L_1$.
                Let $\P$ be the prefix of $\L_1$ with reservation values greater than $z_1^1$. By~\Cref{prop01}, we note that $z_1^1$ depends only on $\P$. 
                Let $\Q$ be the prefix of $\L_i$ including $b_1^i$, and extended to contain all reservation values greater than $z_1^1$, and let $\Q'$ be the prefix of $\L_i\setminus \Q$ containing all $z$ values greater than $z_1^i$. Again, \Cref{prop01} implies $z_1^i$ depends only on $\Q \cup \Q'$.
                Define $z_\Q$ to be the reservation value of box $b_1^i$ considering only prefix $\Q$. By \Cref{prop01}, $z_\Q\leq z_1^i$.
                For simplicity, denote also $z_{\P}=z^1_1$ and the optimal stopping times in the two prefixes $\P$ and $\Q$ as $\tau_{\P}$ and $\tau_{\Q}$. 
                The heart of the proof is that we can treat the two prefixes $\P$ and $\Q$ as single {\em macro-boxes}, with random costs and random rewards. 
                Let $z_{S}$ be then the largest between $\zq$ and all the reservation values of the boxes accessible after having exhausted $\P$ and $\Q$.
                
                \smallskip
                \noindent Consider the following (suboptimal) executions of the algorithm: \\
                -- \textsc{Strategy A} begins by exploring $\P$, and if at the end of the exploration the largest reward is greater than $z_S$, then stops, otherwise explores $\Q$ as if no reward was found while exploring $\P$.\\
                -- \textsc{Strategy B} begins by exploring $\Q$, then plays optimally, that is it explores $\P$.
                
                Notice that B is the optimal execution of the algorithm, by induction, under the assumption that the first box opened is $b_1^i$. 
                This allows for a more compact representation: for any initial value $w$, we can define the following random variables: 
                $$
                X_{\P}(w):= \max_{\ell = 1}^{\tau_{\P}(w)} X_\ell^{\P}, \quad
                c_{\P}(w):= \sum_{\ell=1}^{\tau_{\P}(w)}c^{\P}_\ell, \quad
                X_{\Q}(w):= \max_{\ell = 1}^{\tau_{\Q}(w)} X_\ell^{\Q}, \quad
                c_{\Q}(w):= \sum_{\ell=1}^{\tau_{\Q}(w)}c^{\Q}_\ell\ ,
                $$           
                where the superscripts $\P$ and $\Q$ for the rewards and the costs simply specify to which prefix the boxes belong to.
                With this notation the expected revenue following strategy $A$ is:
                \begin{align*}
                    \E{-c_{\P}(y) + X_{\P}(y) \mathbb{I}_{X_{\P}(y) \geq z_S} + \mathbb{I}_{X_{\P}(y) < z_S}(- c_{\Q}(y)+\mathbb{I}_{X_{\Q}(y) \geq z_S} X_{\Q}(y))} + \qquad \\ + \E{\mathbb{I}_{X_{\P}(y) < z_S}\mathbb{I}_{X_{\Q}(y) < z_S}\Phi_S(X_{\Q}(y) \vee X_{\P}(y) \vee y)},
                \end{align*}
                where $\Phi_S(\cdot)$ here is the optimal expected revenue after exhausting both $\P$ and $\Q$.
                
                For strategy B, it is convenient to note that the prefix $\P$ is opened only if in prefix $\Q$ no reward is greater than $z_{\P}$. 
                This means that the stopping times $\tau_{\P}(y)$ and $\tau_{\P}(y \vee X_{\Q}(y))$ are the same in this case. 
                Hence, $X_{\P}(y)=X_{\P}(X_{\Q} \vee y)$ and $c_{\P}(y)=c_{\P}(X_{\Q} \vee y)$.
                This is the main step: for both the strategies, if both the prefixes are activated, then the stopping time of the second one is independent to the realizations in the previous prefix.
                Now the reward due to B is
                \begin{align*}
                    &\E{-c_{\Q}(y) + X_{\Q}(y) \mathbb{I}_{X_{\Q}(y) \geq z_{\P}} +\mathbb{I}_{X_{\Q}(y) < z_{\P}}\mathbb{I}_{X_{\P}(y) \geq z_S}(X_{\Q}(y) \vee X_{\P}(y))} + \qquad\\ &+ \E{-c_{\P}(y) \mathbb{I}_{X_{\Q}(y) < z_{\P}}+\mathbb{I}_{X_{\P}(y) < z_S}\mathbb{I}_{X_{\Q}(y) < z_S}\Phi_S(X_{\Q}(y) \vee X_{\P}(y) \vee y)}.
                \end{align*}

                If we observe the two expected revenues, we note that the last term is equal in both. Since we intend to compare the two revenues, this term may be ignored.
                Moreover, the dependence of all the stopping times is only on $y$, so we can omit it in the future. The remainder of this proof is similar in spirit to \cite{weitzman}. For simplicity of notation, we introduce the following shorthand:
                $$
                \pi_\P := \ind{\xp > z_\P}, \quad \lambda_\P := \ind{z_\P > \xp > z_S},  \quad \pi_\Q := \ind{\xq > z_\P},  \quad \lambda_\Q := \ind{z_\P > \xq > z_S}.
                   \\
                $$
                
                A few observations: $\pi_\P\lambda_\P=\pi_\Q\lambda_\Q=0$, {\em i.e.} the events are mutually exclusive, and
                $\pi_\P+\lambda_\P$ is the event that strategy A will stop after having explored $\P$ (by definition of strategy A). 
                With this notation, and ignoring the common term, we have:
               \begin{align*}
                   A = \E{-\cp + (\lambda_\P+ \pi_\P)\xp + (1-\pi_\P-\lambda_\P)(-\cq) + (1-\pi_\P-\lambda_\P)(\lambda_\Q + \pi_\Q)\xq},\\
                   B= \E{ - \cq + \xq \pi_\Q+\lambdaq(-\cp+\pp\xp+\lambdap (\xp \vee \xq)+\xq(1-\pp-\lambdap))}+ \mkern16mu\\ + \E{(1-\pq -\lambdaq)(-\cp + \pp \xp + \lambdap \xp)}.
               \end{align*}
                Computing the difference we get to:
                \begin{align*}
                   A - B = \E{\pq(\pp \xp - \cp)+(\lambdap + \pp)(\cq - \pq \xq)+(\lambdaq + \pq)\lambdap \xp - \lambdap \lambdaq (\xp \vee \xq)}.
                \end{align*}
                At this point we plug in the definition of reservation values using the independence of the two prefixes (they are independent because the strategies are designed in such a way that the stopping times on different prefixes are independent from the realization of the other one, given that they are both played): 
                \begin{align*}
                    \E{\cp}&=\E{(\xp-z_\P)_+}=\E{(\xp - z_\P)\pp} \\
                    \E{\cq}&\geq \E{(\xq - z_{\Q})\pq+(\xq-z_{\Q})\lambdaq} & (\ind{\xq \geq z_{\mathcal Q} } \geq \pi_{\Q}+\lambda_{\Q}).
                \end{align*}
                So we get:
                \begin{align*}
                    A&-B \geq \E{ \pq \pp (\zp-\zq) -\lambdap \pq \zq + \lambdaq(\lambdap + \pp)(\xq-\zq)}\\ 
                    &\mkern36mu+\E{(\pq + \lambdaq )\lambdap \xp-\lambdap \lambdaq (\xp \vee \xq)} \\ 
                    &\geq 
                    \E{\lambdap \pq (\xp - \zq) + \lambdaq \pp (\xq - \zq) + \lambdap \lambdaq (\xp + \xq - \zq - \xq \vee \xp)},
                \end{align*}
                where in the last inequality we used the fact that $\zp \geq \zq.$ 
                Recalling the definitions of the $\lambda$'s and $\pi$'s, 
                the only term that is not clearly positive is the rightmost term. 
                But $\xp+\xq-\xp\vee\xq = \xp\wedge\xq$, which is greater than $\zq$, unless $\lambdap\lambdaq=0$.
                Thus, we conclude $A-B\geq 0$, as desired.
                \\ \\
                \noindent \textbf{Case II:} If $z_1^1> y \geq z_1^i$, then consider the following modified instance: decrease the cost $c_1^i$ of box $b_1^i$ in such a way that the new $z_1^i$ value now lies (strictly) between $z_1^1$ and~$y$. Let $\Delta$ denote this change in cost.
                For the sake of mathematical analysis, we will allow negative cost.
                Denote as $\phi'(A)$, $\phi(B)$, and $\phi'(B)$ the expected performances of \textsc{Strategy A} on the modified instance, 
                \textsc{Strategy B} on the original instance, and \textsc{Strategy B} on the modified instance, respectively.
                Define \textsc{Strategy C} as making the same decisions as $A$ would in the modified instance --- as a function of the observed values --- but while playing in the original instance. Let $\phi(C)$ be its expected performance.
                
                Since B always opens $b_1^i$, we have that $\phi'(B)-\phi(B)=\Delta$.
                However, A can choose to open $b_1^i$, depending on the observed random variables, so $\phi'(A)-\phi(C)$ is an expectation over 0 and $\Delta$. 
                Thus, $\phi'(A)-\phi(C)\leq \phi'(B)-\phi(B)$. 
                However, we have shown in Case I that $\phi'(A)\geq\phi'(B)$, and so this implies that $\phi(C)\geq \phi(B)$. 
                Since C is suboptimal, this implies that B is suboptimal, as desired.
            \end{proof}

        \subsection{\PBPTree}\label{sec:tree-constraints-opt}

            In the last subsection we have solved the problem on multiple parallel lines. The main difficulty lied in the interplay between the different lines in the optimal strategy. 
            This was solved  by proving that the lines can be divided into {\em macroboxes} which behave  like single boxes.
            The work of the previous section is in fact enough to prove that the Generalized Pandora's Rule is optimal even for trees.
            
            We begin by extending the definition of reservation values beyond lines. 
            One might naturally try to extend stopping times to a more general exploration rule. 
            Instead, we observe that if there is an optimal threshold strategy on a subtree, then it is equivalent to view it as a line constraint, following the threshold ordering. 
            This highlights the power of the concept of macro-boxes: they are not only a feature of the analysis, but enable us to decompose a tree into a line.
            
            This requires the introduction of the function $\textsc{Merge}(\L_1,\L_2)$ which takes as input two lines along with the information on the reservation values of their boxes, and outputs the line obtained by their merging according to decreasing reservation values, maintaining the relative orders of boxes in the same line.
            With this in mind, the algorithm is presented formally in Algorithm \ref{alg:tree}.
            
            \def\term{\textsc{Terminate}}
            \def\null{\textsc{Null}}
            \begin{algorithm}
            \caption{Pandora's Rule for Tree}
                \KwData{Distributions of the random rewards, box costs and tree constraint $T$.}
                Initialize queue $leaves$;\\
                \For{$i\gets 1$ \KwTo $n$}{
                    \If{box $b_i$ is a leaf}{initialize line $\L_i=[b_i]$ and enqueue $b_i$ in $leaves$}}
                \While{$leaves$ is not the empty queue}{
                    dequeue box $b_\ell$ from $leaves$;\\
                    Initialize an empty line $\L$\\
                    \For{$b_j$ in $\textsc{children}(b_\ell)$}{
                        $\L \gets \textsc{Merge}(\L,\L_j)$\tcc*{Taking reservation values into account}
                    }
                    $z_\ell  \gets \textsc{computeThreshold}(b_\ell,\L)$ and $\L_\ell \gets [b_j] + \L$\\
                    \If{$\ell \neq 1$ and $z$ has been computed for all the children of $\textsc{parent}(b_\ell)$}{enqueue $\textsc{parent}(b_\ell)$ in $leaves$}
                }
                \textbf{return} the generalized reservation values $z_j$, and the linearized tree $\L_1$
            \label{alg:tree}
            \end{algorithm}
            Having defined the Generalized Reservation prices for the tree case, we show the following:
            \begin{thm}
                The Generalized Pandora's Rule is optimal for the \PBPTree and can be computed in polynomial time and space.
            \end{thm}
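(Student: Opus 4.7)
The plan is to proceed by bottom-up induction on the tree $T$, showing that at each internal node $b_\ell$ the subtree rooted at $b_\ell$ is behaviorally indistinguishable from the single line $\L_\ell = [b_\ell] + \textsc{Merge}(\L_{j_1},\dots,\L_{j_k})$ assembled by Algorithm~\ref{alg:tree}, where $\L_{j_t}$ is the line produced by the recursive call on each child $b_{j_t}$. Once this equivalence has been propagated all the way up to the root, applying \Cref{thm:general-pandora-opt-one-line} to the line associated with the root yields optimality of the Generalized Pandora's Rule on the whole tree. The polynomial running time is then immediate: each of the $n$ nodes triggers one $\textsc{Merge}$ (of total length $O(n)$) and one polynomial-time call to $\textsc{computeThreshold}$, which is polynomial in $n$ and the support size $s$ by the dynamic program of \Cref{sec:one-line-opt}.

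The base case is trivial: a leaf is a line of length one whose generalized reservation value coincides with the classical reservation value of \Cref{def:reservation}. For the inductive step, fix an internal node $b_\ell$ with children $b_{j_1},\dots,b_{j_k}$, and assume each child subtree has already been linearized into a line $\L_{j_t}$ with correct thresholds. Any feasible strategy on the subtree rooted at $b_\ell$ must first either stop or open $b_\ell$; conditional on opening it, the player faces a \PBP instance on the \emph{union of disjoint lines} $\L_{j_1},\dots,\L_{j_k}$. By \Cref{main-thm}, the Generalized Pandora's Rule is optimal on such a union and opens boxes in decreasing-threshold order subject to intra-line precedence --- which is exactly the ordering that $\textsc{Merge}$ produces. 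Therefore the subtree rooted at $b_\ell$ is equivalent to the single line $\L_\ell$, and \Cref{def:reservationLine} applied to $b_\ell$ prepended to $\textsc{Merge}(\L_{j_1},\dots,\L_{j_k})$ yields precisely the threshold $z_\ell$ computed by $\textsc{computeThreshold}$.

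The main obstacle is making rigorous the informal claim ``a subtree behaves like a line'' in a way that the outer induction can consume. Concretely, I need to establish that the joint distribution of (set of boxes opened, maximum reward observed, total cost paid) produced by the optimal exploration of a subtree, viewed as a function of the value $y$ with which the subtree is entered, depends only on the merged threshold sequence of $\L_\ell$ and on nothing finer about the subtree's internal structure. This is essentially the macrobox insight already exploited in the proof of \Cref{main-thm}: once two prefixes have both been entered, the stopping time on one is independent of the realizations on the other, so concatenating the single box $b_\ell$ in front of $\textsc{Merge}(\L_{j_1},\dots,\L_{j_k})$ does not disturb the one-line analysis of \Cref{sec:one-line-opt}. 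Once this linearization is established cleanly for one layer, the induction carries the conclusion up to the root without further complication.
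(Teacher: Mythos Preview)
Your high-level strategy---linearize each subtree and reduce to \Cref{main-thm}---is exactly the paper's. But your bottom-up structural induction carries a weaker hypothesis than is needed, and this creates a gap at the inductive step. Your hypothesis asserts that each child subtree $T_{j_t}$, \emph{in isolation}, behaves like the line $\L_{j_t}$. After opening $b_\ell$, however, the player faces the \emph{forest} $T_{j_1},\dots,T_{j_k}$, and an optimal forest strategy may interleave the subtrees---and, crucially, may choose among available siblings inside $T_{j_1}$ in an order that depends on what was observed in $T_{j_2}$. Nothing in your hypothesis rules this out, so the assertion ``conditional on opening it, the player faces a \PBP instance on the union of disjoint lines $\L_{j_1},\dots,\L_{j_k}$'' is not yet justified: per-subtree equivalences do not automatically compose into a forest-level equivalence, because the line $\L_{j_t}$ has \emph{fewer} feasible strategies than the tree $T_{j_t}$. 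Your final paragraph correctly locates the difficulty, but the proposed fix (pinning down the outcome distribution of optimal play on a subtree as a function of the entry value $y$) is still a statement about one subtree at a time and does not address cross-subtree interleaving.

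The paper sidesteps this by inducting on the number of un-opened boxes rather than on the tree structure. Its hypothesis then applies to every reachable state; in particular, after opening any available root $b_i$, the optimal continuation is already GPR on a strictly smaller instance, which forces the relative order inside each remaining subtree $T_j$ to be the fixed $\L_j$ order, independent of which $b_i$ was opened first. This collapses the multi-tree decision at the current state to exactly the multi-line decision, and \Cref{main-thm} finishes. Your argument can be repaired by adopting this induction, or equivalently by strengthening your hypothesis to ``GPR is optimal on every forest of already-processed subtrees.''
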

            \begin{proof}
                We will again do this by induction on the number of un-opened boxes. 
                If there is only one box, everything follows as before. 
                If we are in a state where there are multiple available subtrees to continue along,
                let $T_1,\,\dotsc,\,T_k$ be these subtrees, and $b_1,\,\dotsc,\,b_k$ be their respective roots.
                By induction, after opening any root box $b_i$, there is an ordering $\prec_i$ induced on the remaining boxes in all subtrees, for the optimal strategy to explore.
                Observe that restricted to any subtree $T_j$, the nodes of $T_j$ are ordered the same in all $\prec_i$ orderings, including $\prec_j$.
                Thus, we may define $\L_j$ as the line that represents this common ordering of the nodes of $T_j$; 
                it is clear by induction that the multi-tree problem on $T_1,\,\dotsc,\,T_k$ is no more profitable than the 
                multi-lines problem on $\L_1,\,\dotsc,\,\L_k$.
                Furthermore, the reservation value for $b_j$ at the head of $T_j$ is the same as the reservation value for entering $\L_j$, 
                again by induction. 
                Thus, this theorem is a corollary of the previous.
            \end{proof}
            
            \begin{remark}
                It is clear from the previous proof that the Generalized Pandora's Rule can be easily applied to Forest-Constrained instances. As we have already mentioned, it is sufficient to add a dummy root with no cost and no reward pointing to the roots of the actual trees in the forest to recover an equivalent Tree-Constrained Pandora's Box Problem. 
            \end{remark}
            

    \section{Adaptivity Gaps and Approximation Beyond Tree Constraints}\label{sec:appx-algs}
    In the previous section, we sought to design exactly optimal policies, and required exactly comparing the performance of alternative strategies. 
    As we will see in Section~\ref{sec:hardness}, we can not hope to do so for more general constraints, as the problem becomes NP-hard to 
    approximate. 
    For this reason, we seek instead to find approximately optimal solutions.
    We present in this section approximation algorithms for some cases of the \PBPDag.
    Following recent literature on stochastic probing \cite{singla2018price,gupta2017adaptivity,singla2018combinatorial,adaptivityGapProof}, we will go through an {\em adaptivity gap} route, arguing that for any adaptive strategy, there exists a non-adaptive strategy --- {\em i.e.} pre-computing a fixed set and opening it obliviously --- which approximates its performance.
    Therefore, the optimal non-adaptive strategy is a good approximation of the optimal adaptive strategy.
    
    However, as our setting is very broad, and captures many of the complexities of stochastic submodular optimization, it is not likely that an optimal non-adaptive set will be easy to find. 
    Instead, following an approach similar to \cite{anagnostopoulos2019stochastic}, we give a single adaptive strategy which performs, in expectation, better than {\em any} fixed set, and yields therefore a good approximation for the optimal adaptive strategy. 
    We begin with the adaptivity gaps. 
Recall the statement of Theorem~\ref{thm:adaptivity}:
{\renewcommand\footnote[1]{}\adaptivity*}
    
The proof of this theorem closely follows~\cite{adaptivityGapProof}, but it is short, and so we include it here for completeness. As~\citet{adaptivityGapProof} gives the proof in a more general setting,
this proof should in theory extend to other objective functions beyond the max-of-all-entries objective that we have been using. 

\begin{proof}
    The proof is a relatively simple, but clever, idea introduced in the adaptivity gap upper-bound of~\cite{adaptivityGapProof}.
    The idea is to show that if we choose a set at random, according to the distribution induced by $S(\pi^*)$ from the randomness on the rewards, then this set will perform well in expectation over both the random set, and the random rewards. 
    It follows that there must exist some set which performs at least as well as this in expectation.
    
    Formally, we wish to show that if we
    randomly sample the value of each box twice, choose optimal boxes adaptively for one of the samples, but measure revenue using the other samples, we lose only a factor 2 in the expectation of the $\max_{i\in S}X_i$ term. As for the $\sum_{i\in S}c_i$ term, we are opening the same set, so they cost the same.
    Note that this considers only feasible sets $S$, by definition.
    
    To this end, let $X_1,\,\dotsc,\,X_n$ be the random payoff values of the boxes, and let $Z_1,\,\dotsc,\,Z_n$ be respectively identically distributed copies of the $X_i$'s, sampled independently.  Fix an optimal adaptive strategy $\pi$, and let $\pi(S,y)\in [n]$ denote the choice of the next box to open after having opened $S$, and observing largest value $y$.
    Let $\mathbb S(\pi,X|S,y)$ be the (random) final set that $\pi$ opens when it chooses to terminate, if it starts with set $S$ and total $y$.
    We denote
    \[
        \mu_Z(S,y,y'):= \E{(-y'+\textstyle\max\{Z_i:i\in \mathbb S(\pi,X|S,y),\, i\notin S\})_+}
    \]
    the expected future gain when playing according to the $X_i$ values starting in state $(S,y)$, but measuring revenue with the $Z_i$'s from state $(S,y')$.
    Note that $\mu_X(\emptyset,0,0)$ is the expected revenue of playing according to the adaptive strategy, and 
    $\mu_Z(\emptyset,0,0)$ is the expected revenue of 
    randomly picking a set according to the $Z_i$'s. 
    
    We wish to show $\mu_X(S,y,y')\leq 2\mu_Z(S,y,y')$, by induction on the set $S$, as it ranges over all subsets, in decreasing order of cardinality.
    Note that if $(S,y)$ is such that the policy $\pi$ will choose to terminate, 
    then both values are $y-y'$. 
    Otherwise, fix $S$, $y$, and $y'$, and let $p:=\pi(S,y)$. We have
    \begin{align*}
        \nonumber \mu_X(S,y,y')&= \E{(X_{p}-y')_+ + \mu(S+p,\ y\vee X_{p},\ y'\vee X_{p})}\\
        &\leq \E{((X_{p}\vee Z_{p})-y')_+ + \mu(S+p,\ y\vee X_{p},\ y'\vee (X_p\vee Z_{p}))}\\
        &\leq \E{(X_{p}-y')_+ + (Z_{p}-y')_+ + \mu(S+p,\ y\vee X_{p},\ y'\vee (X_p\vee Z_{p}))}\\
        &= \E{2(Z_{p}-y')_+ + \mu(S+p,\ y\vee X_{p},\ y'\vee (X_p\vee Z_{p}))}
    \end{align*}
    Where the first inequality asserts that earning more up front an only help, and the last equality holds by linearity of expectation and the identical distributions of $X$ and $Z$. Furthermore,
    \begin{align*}
        \nonumber \mu_Z(S,y,y')&= \E{(Z_p-y')_+ + \mu_Z(S+p,\ y\vee X_p,\ y'\vee Z_p)}\\
        &\geq \E{(Z_p-y')_+ + \mu_Z(S+p,\ y\vee X_p,\ y'\vee (X_p\vee Z_p))}
    \end{align*}
    Since $\mu(S,y,y')$ is non-increasing in $y'$.
    By linearity of expectation, and by induction on $S$, we get $\mu_Z(\emptyset,0,0)\geq \tfrac12 \mu_X(\emptyset,0,0)$, as desired.
\end{proof}

With this result in hand, it remains to show that we can develop an {\em adaptive} strategy which performs at least as well as every non-adaptive strategy, in expectation. 
We will take advantage of the fact that we are working in a tree-constraint, and label the boxes with a pre-order of the nodes of the tree. We will denote the index of box $b$ as $i_b$. Recall that, by the properties of a pre-order, we have that for all $b$, if $b$ has $k$ descendants in the tree, then the descendants of $b$ are exactly those boxes indexed by $i_b+1,\,i_b+2,\,\dotsc,\,i_b+k$. This allows us to keep track of which boxes can legally be opened if we choose to not open $b$, since we may simply jump ahead in the pre-order. 

We wish to use this fact to design a simple dynamic program computing the best adaptive strategy among all which only consider boxes following the pre-order.
The pre-order allows us to use an index in the order to store the tree-constraint information on $S$, but it remains to efficiently encode information regarding the matroid constraint. 
To this end,
we define here a characterization of all constraints with ``oblivious feasibility oracles'': 
\begin{definition}
    A constraint on the feasible sets $S$ of boxes to open is said to have an {\em oblivious feasibility oracle} if it is characterized by a set function $D(S)$ with the following properties: 
    \begin{enumerate}
        \item $\{D(S):S\subseteq [n]\}$ is supported on  polynomial in $n$ values,
        \item  For any $S$ and $u\notin S$, $D(S+u)$ is efficiently computable knowing only $D(S)$ and $u$, and
        \item For any $S$, it can be efficiently determined whether $S$ is feasible knowing only $D(S)$.
    \end{enumerate}
\end{definition}

To illustrate this notion, we take as an example a {\em generalized knapsack constraint}, where every box $b$ is assigned a vector $\bm w_b\in \mathbb Z_+^d$, and we have a capacity vector $\bm m\in \mathbb Z_+^d$. Here $d$ is a constant.
A set $S$ is feasible if $\sum_{b\in S} \bm w_b\preceq \bm m$, taken componentwise. 
The function $D(S)$ is simply $\sum_{b\in S} \bm w_b$, and 
we require that the entries of $\bm m$ be polynomial in $n$.

Note that this generalized knapsack constraint includes, as a special case, knapsack constraints, cardinality constraints, and even partition matroids with $O(1)$ partitions.

We will define the function $\Psi(i,y,D)$ recursively below, which denotes the expected revenue if we start at position $i$ in the sequence, having already collected $y$, with feasibility oracle value $D$. As a base case, $\Psi=0$ when $D(S)=D$ implies $S$ is not feasible, and $\Psi=y$ when $i=n+1$. Otherwise, 
let $\textsc{next}(i)$ denote the first position after $i$ in the pre-order on the tree such that $\textsc{next}(i)$ is not a descendent of $i$. Then
\[
    \Psi(i,y,D(S)) := \max \begin{cases}
        y\\
        \Psi(\textsc{next}(i),y,D(S))\\
        -c_i + \E{(X_i-y)_+} + \E{\Psi(i+1,\ y\vee X_i,\  D(S+i))}
    \end{cases}
\]
Since the $X_i$'s take only polynomially many values, then this function can be computed in polynomial time, by definition of $D$. 
We can also simultaneously compute the associated adaptive policy $\pi$ as in Algorithm~\ref{alg:adapt-approx} below. 
Let $Y$ be the set of all possible values attained by all the $X_i$'s, and $\mathcal D$ be all possible values attained by $D(S)$.
The $\theta$ function returned by the algorithm determines the policy: if we are in state $(S,y)$, and the max index of an element in $S$ is $i$, then $\pi(S,y):= \theta(i,y)$.

\def\term{\textsc{Terminate}}
\def\null{\textsc{Null}}
\begin{algorithm}
\caption{Approximately Optimal Adaptive Strategy}
    \KwData{Pre-ordering $b_1,\,\dotsc,\,b_n$, Oblivious feasibility oracle $D$, box costs, and random payoffs.}
    \For{$i\gets n+1$ \KwTo $1$}{
        \For{$y\in Y$, $D\in \mathcal D$}{
            \lIf{$D$ is infeasible}{$\Psi(i,y,D)\gets 0$}
            \lElseIf{$i=n+1$}{$\theta(n,y)\gets \term$ \textbf{and} $\Psi(n+1,y,D)\gets y$}
            \Else{
                $\textsc{open}\gets -c_i + \E{(X_i-y)_+} + \E{\Psi(i+1,\ y\vee X_i,\  D(S+i))}$\\
                $\textsc{skip}\gets \Psi(\textsc{next}(i),y,D)$\\
                $\Psi(i,y,D)\gets \max\{y,\textsc{open},\textsc{skip}\}$\\
                \lIf{$\max\{y,\textsc{open},\textsc{skip}\} = y$}{$\theta(i-1,y)\gets \term$}
                \lIf{$\max\{y,\textsc{open},\textsc{skip}\} = \textsc{open}$}{$\theta(i-1,y)\gets i$}
                \lIf{$\max\{y,\textsc{open},\textsc{skip}\} = \textsc{skip}$}{$\theta(i-1,y)\gets \theta(\textsc{next}(i),y)$}
            }
        }
    }
\label{alg:adapt-approx}
\end{algorithm}
\begin{claim} \label{obs:best}
 The strategy returned by Algorithm \ref{alg:adapt-approx} is at least as good as any non-adaptive strategy.\end{claim}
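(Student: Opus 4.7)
The plan is to identify each feasible non-adaptive set $S$ with a specific sequence of decisions considered by the DP, and then invoke the pointwise optimality of the Bellman update to conclude that the DP value at the start state dominates the expected revenue of opening $S$. Since the returned policy $\hat\pi$ realises exactly this start-state DP value, this establishes the claim.

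The first step is to record a structural fact. Because the tree constraint forces every ancestor of a box in $S$ to also lie in $S$, we have the contrapositive: if $b_i \notin S$ then no descendant of $b_i$ is in $S$. Combined with the pre-order property that descendants of $b_i$ occupy the consecutive block of indices $i+1,\dots,i+k$, this means that the DP's \textsc{skip} action---which jumps to $\textsc{next}(i)$---excises exactly the subtree of $b_i$ without losing any element of $S$.

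Now define the follow-$S$ strategy $\sigma_S$: traverse the boxes in pre-order, taking \textsc{open} at $b_i$ iff $b_i \in S$ and \textsc{skip} otherwise, and never terminating early. By the previous observation, $\sigma_S$ reaches the end of the pre-order having opened exactly the boxes of $S$ independently of the random payoffs, so its expected revenue is
\[
\E{\max_{i \in S} X_i - \sum_{i \in S} c_i},
\]
which matches the non-adaptive benchmark. Moreover, every intermediate set visited along the way is a subset of $S$; for the feasibility oracles of interest in this paper (generalized knapsack, sufficiently oblivious matroids) these subsets are themselves feasible, so the DP never collapses $\Psi$ to $0$ on any state along the $\sigma_S$-trajectory.

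The final step is a backward induction on $i$ from $n+1$ down to $1$, establishing that at every state $(i,y,D)$ reached by $\sigma_S$ we have $\Psi(i,y,D) \geq V(\sigma_S \mid (i,y,D))$, the expected revenue of continuing with $\sigma_S$ from that state. The base case $i=n+1$ gives $\Psi = y$, matching the terminal revenue. For the inductive step, the Bellman update sets $\Psi(i,y,D)$ equal to the maximum over \textsc{terminate}, \textsc{open}, \textsc{skip}, while $\sigma_S$ selects one of these three options; applying the inductive hypothesis at the successor state yields the inequality. Specialising to $i=1$, $y=0$, $D=D(\emptyset)$ and using that $\hat\pi$ realises $\Psi(1,0,D(\emptyset))$ by construction of $\theta$ gives $\E{\text{rev}(\hat\pi)} \geq \E{\max_{i\in S} X_i - \sum_{i\in S} c_i}$. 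Since $S$ was an arbitrary feasible set, the claim follows. The only subtle point---and the real obstacle---is ensuring that the DP's infeasibility shortcut never fires on an intermediate state of $\sigma_S$, which is precisely where downward closure of the oblivious oracle (guaranteed by the examples of interest) is used.
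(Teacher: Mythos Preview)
Your proof is correct and follows the same backward-induction route as the paper's: both argue that the DP value $\Psi$ at each position dominates the expected revenue of opening the remaining elements of $S$ from that state, bottoming out at $i=n+1$. Your version is more explicit about why the \textsc{skip} move faithfully encodes omitting a box in a tree-feasible set and why downward closure keeps the infeasibility shortcut from firing along the $\sigma_S$ trajectory, points the paper's terse proof leaves implicit.
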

 \begin{proof}
    This can be seen by induction on the $i$ variable of the dynamic program. Let $S_{-j}:=S\cap\{n-j,\,\dotsc,\, n\}$.
    We wish to show that $\Psi(n-j,y,D(S\setminus S_{-j}))\geq \E{\max_{i\in S_{-j}}X_i - c(S_{-j})}$  for all $y$, by induction on $j$. Note that for $j=0$,  both values are equal to the revenue of set $S$.  
    For $j>0$, regardless of whether $n-j+1\in S$, $\Psi$ takes the max over including it and not including it, and by induction, the following $\Phi$ term performs better than $S_{-(j-1)}$ in expectation.
 \end{proof}
    Combining \Cref{obs:best} with \Cref{thm:adaptivity} gives us the following result:

\begin{thm}
    For the \PBPTree augmented with oblivious-feasibility-oracle matroid constraints, we can efficiently compute a policy $\hat \pi$ such that for any $\pi$,
    \begin{equation*}\textstyle
    \E{\max_{i \in S(\hat \pi)} X_i - \sum_{i \in S(\hat \pi)}c_i} \geq \tfrac{1}{2} \E{\max_{i\in S(\pi)} X_i} - \E{\sum_{i\in S(\pi)} c_i}
    \end{equation*}
\end{thm}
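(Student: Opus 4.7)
The plan is to stitch together the two ingredients developed immediately before the theorem statement: \Cref{thm:adaptivity}, which compares any adaptive strategy to \emph{some} non-adaptive set, and the claim (labelled \texttt{obs:best}) that the output $\hat\pi$ of \Cref{alg:adapt-approx} dominates \emph{every} non-adaptive set in expectation. Efficiency of $\hat\pi$ follows from the structure of \Cref{alg:adapt-approx}, so the main task is to combine the two inequalities cleanly.

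First I would check that \Cref{alg:adapt-approx} runs in polynomial time in our setting. The table $\Psi(i,y,D)$ has size at most $n \cdot |Y| \cdot |\mathcal D|$; by the distributional assumption $|Y|$ is polynomial, by the definition of an oblivious feasibility oracle $|\mathcal D|$ is polynomial, and each update involves one expectation over $X_i$ (polynomial support), one oracle call to update $D(S+i)$, and a constant-size max. The pointer $\textsc{next}(i)$ is a fixed function of the pre-ordering, precomputable in linear time. Thus the full DP, and hence $\hat\pi$, is computable in polynomial time.

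Next, fix an arbitrary adaptive strategy $\pi$. By \Cref{thm:adaptivity} applied to the prefix-closed family obtained by intersecting the tree constraint with the matroid constraint, there exists a feasible set $S^\pi$ with
\[
\textstyle\E{\max_{i\in S^\pi}X_i - \sum_{i\in S^\pi}c_i}\;\geq\;\tfrac12\,\E{\max_{i\in S(\pi)}X_i} - \E{\sum_{i\in S(\pi)}c_i}.
\]
By \Cref{obs:best}, the policy $\hat\pi$ output by \Cref{alg:adapt-approx} satisfies
\[
\textstyle\E{\max_{i\in S(\hat\pi)}X_i - \sum_{i\in S(\hat\pi)}c_i}\;\geq\;\E{\max_{i\in S^\pi}X_i - \sum_{i\in S^\pi}c_i}.
\]
Chaining the two inequalities gives exactly the desired bound.

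The only subtle step is making sure \Cref{obs:best} really applies to an arbitrary feasible set $S^\pi$, and not just to sets discovered in some restricted way by the pre-order. The induction in the proof of \Cref{obs:best} walks the pre-ordering from right to left and, at each index, the DP takes a maximum over ``open'' and ``skip'', so every feasible subset of the boxes corresponds to at least one path through the DP whose value equals the revenue of that subset; hence the optimum of the DP upper-bounds the revenue of every feasible $S$. I would flag this as the place that warrants the most care in the write-up, but it does not require new ideas beyond the observation already proved. Combining the three displayed inequalities yields the theorem; no further computation is needed.
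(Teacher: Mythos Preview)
Your proposal is correct and matches the paper's approach exactly: the paper's entire proof is the single sentence ``Combining \Cref{obs:best} with \Cref{thm:adaptivity} gives us the following result,'' which is precisely the chaining you describe. Your added checks on the polynomial running time of \Cref{alg:adapt-approx} and on why \Cref{obs:best} covers every feasible set are sound and make the argument more explicit than the paper's own write-up.
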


    \section{Impossibility and Hardness Results}\label{sec:hardness}

In this Section, we show the impossibility results outlined in Section \ref{sec:model}. 
We first show that, when the precedence graph is not a tree, then there may not exist an optimal strategy which has a threshold structure. 
We then show the approximation hardness of solving the \PBP with both general order constraints and $\F= \F_T \cap \mathcal{I}_\M$ where $T$ is a tree and $\M$ is a matroid.

The \PBPDag where constraints are given by a DAG $G$ ,
requires that a box only be opened once one at least one of its in-neighbours in $G$ is open.

\subsection{Suboptimality of Threshold Strategies}

    \begin{thm}
        The \PBPDag need not admit an optimal threshold strategy, when the constraint graph is not a tree. Moreover the same holds for constraints $\F_T \cap \mathcal{I_{\M}}$, where $T$ is a tree and $\M$ a matroid.
    \end{thm}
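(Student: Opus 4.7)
The plan is to prove both parts by exhibiting explicit instances whose unique optimal policies violate the fixed-order property enforced by any threshold strategy. Recall that a threshold rule, up to tie-breaking, determines a priori the relative order in which any two boxes are ever opened; so it suffices in each of the two settings to find an instance together with a pair of boxes $\{b,b'\}$ such that the optimal policy opens $b$ before $b'$ on some realization with positive probability and opens $b'$ before $b$ on another realization with positive probability.

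For the DAG case, I would use three boxes $a$, $b$, $c$, where $c$ has both $a$ and $b$ as in-neighbors; under the OR-semantics of the paper, $c$ becomes accessible as soon as either $a$ or $b$ is opened. I would fix $a$ to take one of two values and tune the costs and reward distributions of $b$ and $c$ so that the two one-step marginal-value functions
\[
y \mapsto \E{(X_b - y)_+} - c_b \qquad \text{and} \qquad y \mapsto \E{(X_c - y)_+} - c_c
\]
strictly cross inside the support of $X_a$. Then after opening $a$ the optimal next action switches between $b$ and $c$ as a function of $X_a$, contradicting the fixed-order property of any threshold rule. To complete the argument, I would verify by direct dynamic programming on the small state space that the optimal expected revenue strictly exceeds that of each of the (at most two) relative orderings of $b$ and $c$ consistent with a threshold strategy.

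For the tree-plus-matroid case, I would take the tree with root $r$, two children $a$ and $b$ of $r$, and a single grandchild $a_1$ under $a$, combined with a uniform rank-$3$ matroid on the non-root boxes. After $r$ and $a$ are opened, the matroid leaves exactly one further opening available, to be spent on either the sibling $b$ or the descendant $a_1$. The same crossover trick, applied now between $X_b$ and $X_{a_1}$ inside the support of $X_a$, makes the optimal choice flip with the realization of $X_a$; the same style of small dynamic-programming check then shows no threshold rule is optimal.

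The main obstacle will be tuning the parameters so that three requirements hold simultaneously: (i) the two marginal-value curves cross strictly inside the interior of the support of $X_a$; (ii) on both values of $X_a$ the optimal policy actually opens a further box rather than stopping, so that the crossover is realized on both branches; and (iii) the resulting gap between the optimal policy and the best of the at-most-two candidate threshold orderings remains strictly positive after averaging over all the rewards. Choosing $X_b$ as a modest deterministic or low-variance reward and $X_{a_1}$ (respectively $X_c$) as a high-variance long-tailed reward gives enough degrees of freedom to meet all three constraints, and reduces verification to an explicit arithmetic comparison on a handful of states.
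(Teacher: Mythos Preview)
Your DAG construction has a genuine gap. In your three-box instance, once $a$ is opened both $b$ and $c$ become accessible and there is no remaining order constraint between them: the residual problem is exactly the classical, unconstrained \PBP on the two boxes $\{b,c\}$. By Weitzman's result, the optimal continuation then opens them in the fixed order dictated by their classical reservation values $\zeta_b,\zeta_c$, and this order does not depend on the realized $X_a$. Your crossover argument compares the one-step quantities $y\mapsto \E{(X_b-y)_+}-c_b$ and $y\mapsto \E{(X_c-y)_+}-c_c$, i.e.\ the values of ``open one box and stop''; but that is not the relevant comparison, since the optimal continuation may well open both boxes, and the quantity that determines order (the reservation value) is independent of $y$. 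Consequently your instance \emph{does} admit an optimal threshold strategy, and the counterexample fails. This is exactly why the paper needs the four-box diamond $A\to\{B,C\}\to D$: the shared descendant $D$ keeps the post-$A$ subproblem genuinely constrained (one still needs $B$ or $C$ to reach $D$), so the residual problem is not Weitzman's, and the choice between $B$ and $C$ can legitimately depend on $X_A$.

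Your tree-plus-matroid construction is closer in spirit, but note the slip: a uniform rank-$3$ matroid on the three non-root boxes $\{a,b,a_1\}$ is no constraint at all; you need rank $2$ there (or rank $3$ over all four boxes) for ``exactly one further opening'' to hold after $r$ and $a$. With that correction the matroid makes the final choice terminal, so your one-step crossover between $b$ and $a_1$ \emph{is} now the right comparison and the argument goes through. This is a different (and also valid) construction from the paper's, which instead lifts the diamond DAG to a tree by duplicating $D$ under each of $B$ and $C$ and imposing a cardinality bound; your version is arguably more direct for this part, but it does not help with the DAG claim, which still needs the shared-descendant idea.
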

    \begin{proof}
        Consider graph (a) in Figure \ref{fig1} with the following parameters:
        \begin{equation*}
        \arraycolsep=1.4pt\def\arraystretch{1.25}
        \begin{array}{llllllll}
            X_A &= \left\{\begin{array}{ll}
                2.5&\text{ w.p. }\tfrac 12\\
                0&\text{ w.p. }\tfrac 12
            \end{array}\right.,\quad
            &X_B &= 2,\quad
            &X_C &= \left\{\begin{array}{ll}
                3&\text{ w.p. }\tfrac 12\\
                0&\text{ w.p. }\tfrac 12
            \end{array}\right.,\quad
            &X_D &= \left\{\begin{array}{ll}
                6&\text{ w.p. }\tfrac 12\\
                0&\text{ w.p. }\tfrac 12
            \end{array}\right.,
            \\[1.2em]
            c_A&=0,&
            c_B&=1,&
            c_C&=1-\tfrac\varepsilon 2,&
            c_D&=0.
        \end{array}\end{equation*}

        For $\varepsilon \in [\tfrac54,2]$ it can be shown that it is optimal to start the exploration of the graph from $A$, then, depending on the realization of $X_A$ it is optimal to open $B$ (and then $D$) or to open $C$ (then $D$ and then possibly $B$). 
        If we now consider an instance of the \PBPDag on $\F_T\cap \mathcal I_\M$ where $\mathcal I_\M$ is all subsets of cardinality 4, and $T$ is given by (b) in Figure \ref{fig1}, with boxes $A,B$ and $C$ and two copies $E$ and $F$ of $D$, then we inherit the results from (a).
    \end{proof}

    \begin{figure}[h!] 
    \def\r{0.8}
    \centering
    {\footnotesize
    \begin{tikzpicture}[node distance = {1.0cm and 1.0cm}, v/.style = {draw, circle}, pil/.style = {->}]

        \node (text) at (-1.65*\r,1.2*\r) {(a)};
        \node (a) [v] at (0,\r) {A};
        \node (c) [v] at (\r,0) {C};
        \node (b) [v] at (-\r,0) {B};
        \node (d) [v] at (0,-\r) {D};
        
        \draw[->] (a) to node[above]{} (c);
        \draw[->] (a) to node[above]{} (b);
        \draw[->] (b) to node[below]{} (d);
        \draw[->] (c) to node[below]{} (d);

    \end{tikzpicture}\hspace{3em}
    \begin{tikzpicture}[node distance = {1.0cm and 1.0cm}, v/.style = {draw, circle}, pil/.style = {->}]

        \node (text) at (-1.65*\r,1.2*\r) {(b)};
        \node (a) [v] at (0,\r) {A};
        \node (c) [v] at (\r,0) {C};
        \node (b) [v] at (-\r,0) {B};
        \node (e) [v] at (-\r,-1.1*\r) {E};
        \node (f) [v] at (\r,-1.1*\r) {F};
        
        \draw[->] (a) to node[above]{} (c);
        \draw[->] (a) to node[above]{} (b);
        \draw[->] (b) to node[below]{} (e);
        \draw[->] (c) to node[below]{} (f);

    \end{tikzpicture}}
    \caption{The order of optimal adaptive exploration is not fixed}
    \label{fig1}
    \end{figure}
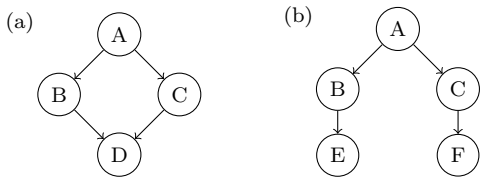
    \subsection{Hardness of Approximation}
    As previously mentioned, we wish to show that it is NP-hard to approximate an optimal strategy for general order constraints and $\F= \F_T \cap \mathcal{ I}_{\M}$ where $T$ is a tree and $\mathcal{ I}_{\M}$ are the independent sets of a matroid $\M$.
    Formally, we prove it is NP-hard to design a policy with approximately optimal rewards, for some constant. 
    Approximation is taken in the sense of the previous section. 
    \begin{thm}
    \label{thm:daghardness}
        It is NP-hard to approximate within 0.9997 the optimal strategy to the \PBP with DAG constraints.
        It is sufficient for the DAG to have depth 2 and fan-in 3.
    \end{thm}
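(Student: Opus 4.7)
My plan is to obtain the hardness by a gap-preserving reduction from Max-E3-SAT, invoking H{\aa}stad's celebrated $(7/8+\varepsilon)$ inapproximability theorem, and to encode the formula as a \PBPDag instance whose precedence DAG realises the stated depth and fan-in. Given a 3-SAT formula $\phi$ with $n$ variables and $m$ width-three clauses, I construct the following boxes: for every variable $x_i$, two level-1 \emph{literal boxes} $b_i^+$ and $b_i^-$, each with common cost $c$ and deterministic reward $0$; for every clause $C_j$, a single level-2 \emph{clause box} $B_j$ with cost $0$ and random reward $M$ with probability $p$ and $0$ otherwise. The only edges run from the three literal boxes corresponding to the literals of $C_j$ into $B_j$, so every level-2 node has fan-in exactly $3$ and the DAG has depth $2$.

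\paragraph{Reduction to a coverage-style optimisation.}
Two structural observations strip away the stochastic and adaptive layers of the problem. First, because clause boxes are free and have non-negative reward, any sensible strategy opens every currently accessible clause box, so the stopping-time decisions are trivial at level~2. Second, because the level-1 boxes carry deterministic reward $0$, no information is revealed by opening them, so adaptivity on literal boxes is vacuous. Every strategy is therefore characterised by the set $S$ of literal boxes it opens, with expected revenue
\[
    M\bigl(1-(1-p)^{|C(S)|}\bigr) - c\,|S|,
\]
where $C(S)$ is the set of clauses having at least one of their satisfying literals in $S$. I then tune the parameters: $p=\Theta(1/m)$, chosen so that $(1-p)^m$ and $(1-p)^{(7/8+\varepsilon)m}$ are well-separated constants in $(0,1)$; $M$ of constant order; and $c$ small enough that opening the $n$ useful literals of a satisfying assignment is worthwhile, yet large enough that doubling up to all $2n$ literals (opening both $b_i^+$ and $b_i^-$) costs more than the extra coverage buys. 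With this calibration, a YES instance admits a strategy covering all $m$ clauses with $n$ literal openings, whereas in a NO instance every strategy either covers at most a $(7/8+\varepsilon)$-fraction of clauses or pays a strictly larger cost to cover them all.

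\paragraph{Obtaining the constant and main obstacle.}
The explicit constant $0.9997$ would emerge from a joint analysis of the two candidate behaviours available in the NO case: a "partial-coverage" strategy, whose $E[\max X_i]$ is suppressed multiplicatively by the survival-function ratio $\tfrac{1-(1-p)^{(7/8+\varepsilon)m}}{1-(1-p)^m}$, and an "open-everything" strategy, whose expected cost is essentially doubled. Picking $p$ so that this ratio evaluates to $0.9997$, and scaling $c$ against $M$ so that the cost surcharge of the second strategy dominates the $E[\max]$ gain it recovers, yields the stated inapproximability via Definition~\ref{def:appx}. The main technical obstacle is precisely the leniency of that benchmark: since only $E[\max]$ is scaled by $C$ while the cost is merely subtracted, a polynomial-time algorithm could in principle "throw extra cost at the problem" to restore $E[\max]$; I must therefore show that neither the partial-assignment nor the open-everything workaround can match the benchmark set by the (NP-hard to find) YES-optimum, and that this double bound is tight at the claimed constant for the \emph{same} setting of $(c,p,M)$. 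Once this two-sided analysis is in place, the reduction is standard: any polynomial-time strategy achieving a $0.9997$-approximation of the optimal \PBPDag strategy would distinguish satisfiable from $(7/8+\varepsilon)$-satisfiable 3-SAT instances, contradicting H{\aa}stad's theorem.
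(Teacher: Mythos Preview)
Your central structural claim---that ``adaptivity on literal boxes is vacuous''---is incorrect, and the reduction collapses once this is repaired. A literal box itself reveals nothing, but opening it unlocks clause boxes whose random rewards you then observe for free. The optimal policy therefore opens literal boxes \emph{one at a time}, immediately opens the newly accessible clause boxes, and \emph{stops} as soon as reward $M$ appears. Hence the expected cost is not $c|S|$ but $c\cdot\mathbb E[\text{number of literals opened before stopping}]$, which depends on the \emph{order} of exploration and on how many new clauses each successive literal covers. Your revenue formula $M\bigl(1-(1-p)^{|C(S)|}\bigr)-c|S|$ describes only a non-adaptive set-opening strategy; it lower-bounds the adaptive optimum and so cannot deliver the upper bound on the NO-case value that any hardness argument needs. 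In particular, your ``open-everything'' alternative does not pay $2cn$ adaptively---its expected cost is again governed by the coverage profile along the chosen order---so the dichotomy between ``partial coverage'' and ``doubled cost'' disappears.

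The paper's proof faces exactly this issue and builds the reduction around it. Reducing from Minimum Vertex Cover on cubic graphs, it observes that the expected cost of an adaptive order is $\sum_i r^{N_i}$, with $N_i$ the number of edge-boxes covered after $i$ vertex-openings; the cubic-graph structure then pins down the optimal coverage profile via the counts $(k_3,k_2,k_1)$ of vertices contributing $3,2,1$ new edges, and these counts are determined by the vertex-cover size. This is what produces the quantitative gap $\Delta\geq 0.000399\,m$ against an expected reward of $m+O(1)$, and hence the constant $0.9997$. Your SAT encoding has no analogue of this: literal occurrence counts are unconstrained, nothing ties the adaptive coverage profile to the satisfiable fraction, and a strategy may freely mix $b_i^+$ and $b_i^-$ without the cost penalty you rely on. To salvage the approach you would need, at minimum, to carry out an order-sensitive cost analysis of the adaptive optimum and to show that the best coverage profile in the NO case is quantitatively worse than in the YES case---which is precisely the vertex-cover argument the paper performs.
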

    
    We will be reducing from the problem of finding a minimum vertex cover on cubic graphs, which is known to be hard to approximate.
    We argue here that, since the constraint graph $G$ has depth $2$ and fan-in $3$, this also implies hardness for tree-and-matroid constraints, $\F_T\cap \mathcal I$. 
    
    \begin{corr}
    It is NP-hard to find the optimal strategy to the \PBPDag with constraint $\F=\F_G \cap \mathcal{ I}_{\M}$ where $\M$ can be any matroid on $B$ even if $G$ is restricted to be a tree.
    It suffices for $\M$ to be a partition matroid.
    \end{corr}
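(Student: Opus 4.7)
The plan is to reduce from the DAG-constrained instance used in Theorem \ref{thm:daghardness}, transforming it into an instance on $\F = \F_T \cap \mathcal{I}_\M$ (tree plus partition matroid) with the same optimal expected revenue, so that any approximately optimal strategy for the latter yields an approximately optimal strategy for the former. The starting point is the DAG $G$ of depth $2$ and fan-in $3$ from the previous theorem; the structural bound on fan-in is what keeps the matroid small and the construction polynomial in size.

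First, I would build the tree $T$ as follows. Keep all depth-$0$ and depth-$1$ boxes intact (hung off a single cost-zero, reward-zero root if needed to form a tree). For every depth-$2$ box $v$ with in-neighbours $u_1, \dots, u_k$ ($k \le 3$) in $G$, replace $v$ with $k$ fresh copies $v_1, \dots, v_k$, and let $v_i$ be a child of $u_i$ in $T$. Each copy inherits the cost $c_v$ of $v$ and an independent copy of $X_v$. Define the partition matroid $\M$ by declaring the set $\{v_1, \dots, v_k\}$ a part of rank $1$ for every original depth-$2$ node $v$, and every non-duplicated box a singleton part of rank $1$; feasibility then amounts to taking at most one copy per original depth-$2$ box.

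Next I would exhibit a revenue-preserving correspondence between strategies. Given a DAG strategy $\pi$, simulate it on the tree/matroid side verbatim on non-duplicated boxes, and whenever $\pi$ opens $v$, open any copy $v_i$ whose parent $u_i$ has already been opened, using $X_{v_i}$ wherever $\pi$ would have read $X_v$. Since $X_{v_i} \stackrel{d}{=} X_v$ and everything else is unchanged, the joint distribution of $(\max_{i \in S} X_i,\, \sum_{i \in S} c_i)$ is preserved. Conversely, given a tree/matroid strategy $\pi'$, build a DAG strategy by opening $v$ at the moment $\pi'$ opens the (unique, by the matroid) copy $v_i$, and using the realized $X_v$ in place of $X_{v_i}$. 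The matroid guarantees that only one of the independent copies $X_{v_1},\dots,X_{v_k}$ is ever observed, so the extra copies present in the tree/matroid instance give no informational advantage to $\pi'$; in particular the expected revenue matches exactly. Both directions preserve $\E{\max_{i \in S(\pi)} X_i}$ and $\E{\sum_{i \in S(\pi)} c_i}$ separately, so the benchmark of Definition \ref{def:appx} transports without loss.

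The subtle step --- and the one I would scrutinize most carefully --- is the converse direction: one must confirm that replacing a single random variable $X_v$ by multiple independent samples $X_{v_1}, \dots, X_{v_k}$ does not strictly increase what an adaptive player can learn. This is exactly where the partition matroid matters: it caps the number of samples any feasible strategy can draw from the set $\{v_1,\dots,v_k\}$ at one, so an adversary choosing which copy to open is equivalent to an adversary choosing \emph{when} along the DAG trajectory to open $v$, with identical reward distribution. Once this correspondence is established, the construction is clearly polynomial-time (each depth-$2$ box spawns at most three copies), so a $0.9997$-approximation on $\F_T \cap \mathcal{I}_\M$ would yield a $0.9997$-approximation on the original DAG instance, contradicting Theorem \ref{thm:daghardness}.
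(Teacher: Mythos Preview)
Your proposal is correct and follows essentially the same approach as the paper: duplicate each sink of the depth-$2$ DAG once per parent, attach each copy under its respective parent to obtain a forest, and impose a partition matroid allowing at most one copy per original sink. The paper's proof is a terse three-sentence version of your construction; your additional care in establishing the two-way strategy correspondence and in noting that the partition matroid prevents any informational gain from the independent reward copies is sound but not a departure from the paper's argument.
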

    \begin{proof}
    The constraint graph $G$ from Theorem~\ref{thm:daghardness} consists of sources and sinks, such that every sink had exactly 3 sources as its parents. 
    Replace each sink with 3 identical copies (including costs and rewards), assign one to each parent, and restrict that at most one copy of each is opened. 
    This is exactly a partition matroid constraint, and the resulting graph is a forest of depth-2 trees, and is equivalent in terms of exploration costs and rewards to the constraint graph $G$.
    \end{proof}

    Finally, the proof of the Theorem is given below.
    
    \begin{proof}[Proof of Theorem~\ref{thm:daghardness}]
    It is known that it is NP-hard to approximate the minimum vertex cover of a cubic graph within a factor of $\approx 1.0012=:1+\epsilon_0$~\cite{clementi1999improved,papadimitriou1991optimization}.
    Let $G=(V,E)$ be a hard-to-approximate instance, 
    and let $n:=|V|$, and $m:=|E|$. 
    Let $\alpha$ be such that the optimal vertex cover has size $\alpha m$.
    Observe, since $G$ is cubic, that $m:=\tfrac32n$, and $\alpha\geq \tfrac 13$. 
    Furthermore, any greedy independent set must have at least $\tfrac 1{3+1}n$ nodes, which implies that its complement is a vertex cover of size at most $\tfrac 34 n = \tfrac 12 m$. Thus, $\alpha\in[\tfrac13,\tfrac 12]$.
    
    We construct, now, the constraint graph $D$. 
    The nodes of $D$ will be labelled by $V\cup E$, where the $V$ nodes will be the sources of the DAG, each having cost $1$ and reward $0$, and the $E$ nodes will be the sinks of the DAG, each having cost $0$ and reward $\beta m$ with probability $\tfrac c m$ and 0 otherwise, for constants $\beta,c>0$ which we will choose later. 
    There is an edge connecting any vertex-box $v$ to each edge-box $e$ such that $e$ in incident to $v$. Since $G$ is cubic, this implies that $D$ has depth 2 and fan-in 3, as required in the theorem statement.
    
    Any optimal strategy must take the following form: (1) Fix an ordering on the boxes labelled by $V$, (2) Pay to open the next vertex-box in the order, then reveal the $\leq 3$ unopened edge-boxes which it reveals. (3) Repeat until the reward has been collected. 
    Suppose that the $i$-th vertex-box we pay for allows us to open $0\leq n_i \leq 3$ new edge-boxes, and $N_i:=\sum_{j=1}^{i-1} n_j$. 
    Then the expected max reward will be $\beta m \cdot (1- (1-\tfrac cm)^m)$, and the expected cost will be
    \[
        \E{\text{\# $V$ boxes opened}}\ =\ \sum_{i=1}^n \Pr{\text{opening $\geq i$ boxes}}\  =\  \sum_{i=1}^n (1-\tfrac cm)^{N_i}
    \]
    
    Observe, without loss of generality, $n_{i-1}\geq n_i$ for all $i$, as swapping the $(i-1)$-st and $i$-th boxes will only increase $N_i$ and leave $N_{i+1}$ and onwards unchanged, reducing the expected cost.
    Thus, in any fixed order, after this swapping, there must exist numbers $k_3$, $k_2$, and $k_1$, such that 
    \[
        n_1=n_2=\dotsm = n_{k_3} = 3,\quad 
        n_{k_3+1}=n_{k_3+2}=\dotsm = n_{k_3+k_2} = 2,\quad
        n_{k_3+k_2+1}=0\dotsm = n_{k_3+k_2+k_1} = 1
    \]
    Note that $3k_3+2k_2+k_1=m$, and that the vertex cover has size $k_3+k_2+k_1$. 
    Setting $r=(1-\tfrac cm)$, we have that the expected cost becomes
    \[
        \sum_{i=1}^{k_3}r^{3i} + \sum_{i=1}^{k_2}r^{3k_3+2i} + 
        \sum_{i=1}^{k_1}r^{3k_3+2k_2+i} = \tfrac{r^3}{1-r^3}
        \left(1-r^{3k_3}\right)
        +
        \tfrac{r^2}{1-r^2}
        \left(r^{3k_3}-r^{3k_3+2k_2}\right)
        +
        \tfrac{r}{1-r}
        \left(r^{3k_3+2k_2}-r^{m}\right)
    \]
    
    In the remainder of the proof, we will bound the values of $k_3$, $k_2$, and $k_1$, for optimal and sub-optimal vertex covers, and show that the difference in expected cost is at least a constant factor of the expected reward. 
    Since it is NP-hard to approximate the vertex cover, this will imply that is it NP-hard to approximate the optimal strategy for the \PBP on $D$. 
    
    Let $S^*$ be an optimal vertex cover of size $\alpha m$, and let $S'$ be any vertex cover of size $\geq (1+\epsilon_0)\alpha m$.
    Let $k_3^*$, $k_2^*$, and $k_1^*$, be as above for the set $S^*$, and $k_3'$, $k_2'$, and $k_1'$ be similarly for $S'$. 
    We wish to lower-bound the cost of opening $S^*$, and upper-bound the cost of opening $S'$, by bounding the possible values of the $k^*$'s and $k'$'s obtained by sub- and super-optimal orderings, respectively. 
    Note that we can trade off $k_3+k_1$ for $2k_2$ to increase the expected cost, and vice versa. 
    Since $\alpha \leq \tfrac12$, then for $S^*$, it will suffice to assume $k^*_1=0$, and increase $k_3^*$ as $\alpha$ approaches $\tfrac 13$. 
    For $S'$, it will suffice to assume $k_2'=0$.
    With the constraints on the $k^*$'s and the $k'$'s, this gives
    \[
        \begin{cases}
            k_3^* = (1-2\alpha) m\\
            k_2^* = (3\alpha -1) m
        \end{cases} \qquad 
        \begin{cases}
            k_3' = \tfrac12(1-(1+\epsilon_0)\alpha) m\\
            k_1' = \tfrac12(3(1+\epsilon_0)\alpha -1) m
        \end{cases}
    \]

    Furthermore, in the expected cost expression above, we get 
    \begin{align}\label{eq:cost1}
        \E{\text{cost}(S^*)}&\leq
        \tfrac {r^3}{1-r^3}\left(1 - r^{3k^*_3}\right) + 
            \tfrac {r^2}{1-r^2}\left(r^{3k^*_3} - r^{m}\right)= \tfrac{r^3}{1-r^3} - \tfrac{r^2}{1-r^2} (r^m) + 
            r^{3k^*_3}\left[
            \tfrac{r^2}{1-r^2} - \tfrac{r^3}{1-r^3}
            \right]\\ \label{eq:cost2}
        \E{\text{cost}(S')}&\geq \tfrac {r^3}{1-r^3}\left(1 - r^{3k'_3}\right) + 
            \tfrac r{1-r}\left(r^{3k_3'} - r^{m}\right)
            = \tfrac{r^3}{1-r^3} - \tfrac{r}{1-r} (r^m) + 
            r^{3k_3'}\left[
            \tfrac{r}{1-r} - \tfrac{r^3}{1-r^3}
            \right]
    \end{align}
        
        Combining (\ref{eq:cost1}) and (\ref{eq:cost2}), the difference $\Delta:=\E{\text{cost}(S') - \text{cost}(S^*)}$ is at least
        \begin{align*}
            \Delta&\geq -r^m\left[ \tfrac{r}{1-r}-\tfrac {r^2}{1-r^2}\right]
            - \tfrac{r^3}{1-r^3}\left(r^{3k_3'}-r^{3k_3^*}\right) + \tfrac{r}{1-r}r^{3k_3'} - \tfrac{r^2}{1-r^2}r^{3k_3^*}\\
          &= \left(r^{3k_3'}-r^{3k_3^*}\right)\left[\tfrac{r^2}{1-r^2}-\tfrac{r^3}{1-r^3}\right] + \left[\tfrac{r}{1-r^2}\right]\left(r^{3k_3'} - r^{m}\right)
        \end{align*}
        Recalling the values of $k_3^*$ and $k_3'$ above, expanding $r=1-\tfrac cm$, and first taking MacLaurin series around ``$\tfrac cm$''$=0$ for the terms in square brackets, then Taylor series for the terms in round brackets, we have
        \begin{align}
            \nonumber \Delta&\geq (\tfrac m{6c} + O(1))\left((1-\tfrac cm)^{3k_3'}-(1-\tfrac cm)^{3k_3*}\right) + (\tfrac m{2c} + O(1))\left((1-\tfrac cm)^{3k_3'}-(1-\tfrac cm)^m\right)\\
            \nonumber &= \tfrac m{6c}\left(e^{-3c(1-\alpha-\epsilon_0\alpha)/2}-e^{-3c(1-2\alpha)}\right) + \tfrac m{2c}\left(e^{-3c(1-\alpha-\epsilon_0\alpha)/2}-e^{-c}\right)+O(1)\\
            &=\tfrac m{6c}\left(4e^{-3c(1-\alpha-\epsilon_0\alpha)/2}-3e^{-c}-e^{-3c(1-2\alpha)}\right)+O(1)
        \end{align}
        Setting $c=(2\epsilon_0)/(3\alpha)$,
        and for convenience, denoting $A:=1/\alpha$, we have
        \begin{equation}
            \Delta\ \geq \ \tfrac{m}{4A\epsilon_0} \left(4e^{-\epsilon_0(A-1-\epsilon_0)}-3e^{-c}-e^{-3c(1-2\alpha)}\right)+O(1)
        \end{equation}
    Recalling that $A\in [2,3]$. 
    For $\epsilon_0=0.0012$ as in~\cite{clementi1999improved,papadimitriou1991optimization}, it can be shown that the function is non-increasing in $A$ on its domain, and plugging $A=3$, 
    we numerically have\footnote{
    More generally, taking the second derivative in $\epsilon_0$ suffices to show that the right hand side is strictly convex in $\epsilon_0$, since $A\in [2,3]$, and its derivative is 0 when $\epsilon_0=0$. This ensures the constant is a positive function of $\epsilon_0$. } $\Delta\geq 0.000399 \cdot m$.
    
    It remains to determine the ratio of the difference in expected cost to the expected reward. Recall that we have set the reward to be $\beta m$ with probability $\tfrac cm$, and 0 otherwise. 
    Since it costs 1 to open a box, and we wish to ensure that even when there is a single edge-box remaining, it is in our interest to open the box, we must set $\beta = \tfrac 1c$. Recall, then, that the expected reward will be
    \[
        \beta m\cdot (1-(1-\tfrac cm)^m) = m\cdot \tfrac 1c(1-e^{-c}+O(\tfrac 1m))= m + O(1) 
    \]
    Thus, an approximation for the \PBP which additively approximates the cost within a $1.00039$ factor of the revenue implies an approximation algorithm for vertex cover on cubic graphs within a factor of $<1.0012$, which is not possible unless $P=NP$. 
    This concludes the proof with a multiplicative constant of $1-0.00039<0.9997$ in the sense of Definition~\ref{def:appx}.
    \end{proof}

\section{Conclusion and Further Directions}
We have shown that solving the \PBPDag admits an efficiently computable optimal solution, for tree-like order constraints. We further showed that unless P=NP, there is no PTAS when the constraints are slightly generalized, and complement this result by showing an approximation algorithm for oblivious matroid constraints on top of tree-like precedence constraints. 
This latter result was shown by upper-bounding the adaptivity gap, and giving methods for beating optimal non-adaptive strategies. 
It is clear then that extending this problem to more general constraints, or even more general objective functions, can be done by giving more general approximation algorithms for the non-adaptive problem. 

\section*{Acknowledgments}
    Federico Fusco, Philip Lazos and Stefano Leonardi are partially supported by ERC Advanced Grant 788893 AMDROMA ``Algorithmic and Mechanism Design Research in Online Markets'' and MIUR PRIN project ALGADIMAR ``Algorithms, Games, and Digital Markets''.
    At the time of writing, Shant Boodaghians was visiting Stefano Leonardi as well.
    Shant Boodaghians is also partially supported by NSF grant 1750436.

\bibliographystyle{plainnat}
\bibliography{bibliography}

\clearpage

\appendix

\section{Learning Pandora}
\label{app:learning}
    
        In this section we apply the techniques in \cite{guo2019generalizing} to prove that a polynomial number of samples from the random variables $\{X_i\}_{i=1}^n$ is enough to solve with a good approximation the \PBPDag generalizing to any prefix-closed constraint.
        Furthermore we prove that, for the \PBPTree, a linear number of samples is enough and it is tight.
        
        The learning procedure to achieve this goal is indeed quite simple: 
        For a fixed $\epsilon>0$, we take an $O(\epsilon)$-grid of the interval $[0,1]$, and 
        for sufficiently many samples, learn the empirical distribution on these grid points.
        We then compute the optimal search policy using this empirical distribution.
        That this is an $\epsilon$-approximation is a straightforward application of standard techniques, and a proof is given in Appendix C2 of~\cite{guo2019generalizing}.
        Formally,
        \begin{itemize}
            \item For each box $b$, let $X_b^\epsilon$ be the random variable obtained by rounding down the reward $X_b$ to the nearest multiple of $c\cdot \epsilon$;
            \item Given $N$ {\em i.i.d.} samples of $X_b^\epsilon$, let $\hat X_b$ be the random variable distributed according to the empirical distribution;
            \item Output the strategy $\hat \pi$ which is optimal with respect to the $\hat X_b$'s.
        \end{itemize}
        We will show that for constants $\epsilon,\delta>0$, and a sufficiently large $N$ depending on $\epsilon$ and $\delta$,
        $\hat \pi$ will be an additive $\epsilon$-approximation of the optimal policy with probability $1-\delta$.
        
        \paragraph{General Constraints.}
        We recall that we have assumed rewards and costs are bounded in $[0,1]$. It is an immediate corollary of Theorem 1 and 7 in~\cite{guo2019generalizing} that, for the \PBP with {\em any} constraints, it suffices to have 
        \[
            N \geq C_1\cdot \tfrac {n^3}{\epsilon^3}\log(\tfrac n{\epsilon\delta})
        \]
        for some universal constant $C_1>0$.
        
        \paragraph{Tree Constraints.}
        We extend the techniques of~\cite{guo2019generalizing} to show that linearly many samples are sufficient for tree constraints. 
        The $n^3$ term in the previous bound comes from the fact that, when rewards and costs are bounded in $[0,1]$, the total performance of a strategy must lie in $[-n,1]$, requiring the $\epsilon$ value to be normalized by $n$. They use more specialized concentration bounds to get around this issue, which we extend to our setting.
        
        The goal is to show that, for an optimal algorithm, the performance over time forms a {\em submartingale}. Equivalently, one should only open a box if, in expectation, the revenue is increasing. 
        This is not true at face value, as it is often necessary to open bad boxes to allow us to move onto better boxes.
        We will use the notion of {\em macro-boxes} which were used in the proof of \Cref{main-thm}, where we showed the generalized Pandora's Rule was optimal for the \PBPTree. 
        \begin{definition}[Macro-Boxes]\label{def:macro}
            Let $\L_T=(b_1,\,b_2,\,\dotsc,\,b_n)$ denote the optimal order of exploration given by the Generalized Pandora's Rule on $T$, and assume $z_i$ is the generalized reservation value for box $b_i$ in this order. 
            Construct a sequence of indices as follows:
            $j_1=1$, and for all $i\geq1$, $j_{i+1}$ is the first index $j>j_i$ such that $z_{j}\leq z_{j_i}$. 
            Then we say that the $i$-th {\em macro box} is given by the collection of boxes $\{b_{j_i},\,b_{j_i+1},\,\dotsc,\, b_{j_{i+1} -1}\}$.
        \end{definition}
        \begin{claim}
             Let $j_1,\,\dotsc$ be as in \Cref{def:macro}, 
             Let $S_i(\pi^*)$ denote the (random) set obtained by following the optimal policy $\pi^*$ only until index $j_i$, and 
             define the random variable 
             \[\textstyle M_i:= \max_{\ell \in S_i(\pi^*)}X_\ell-\sum_{\ell\in S_i(\pi^*)}c_\ell.\] 
             Then the $M_i$'s form a submartingale.
        \end{claim}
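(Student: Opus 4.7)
The plan is to establish $\mathbb{E}[M_{i+1} \mid \mathcal{G}_i] \geq M_i$, where $\mathcal{G}_i$ denotes the sigma-algebra generated by all randomness revealed while $\pi^*$ processes the first $i$ macro-boxes. Let $y := \max_{\ell \in S_i(\pi^*)} X_\ell$ be the running maximum after macro-box $i$. If $\pi^*$ has already terminated, then $S_{i+1}(\pi^*) = S_i(\pi^*)$ and so $M_{i+1} = M_i$ trivially. Otherwise $\pi^*$ proceeds to open box $b_{j_{i+1}}$, and since it follows the Generalized Pandora's Rule this forces $y < z_{j_{i+1}}$.

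In this latter, non-trivial case I would argue that macro-box $i+1$ behaves as a single atomic unit. By \Cref{prop01}(2), $z_{j_{i+1}}$ is determined entirely by the boxes of macro-box $i+1$, since $d(j_{i+1}) = j_{i+2} - 1$ by the very construction of macro-boxes. Hence the macro-box can be treated as a self-contained line. Furthermore, since every reservation value strictly inside the macro-box exceeds $z_{j_{i+1}}$, \Cref{prop01}(3) implies that the optimal stopping time inside the macro-box is a function only of the internal rewards $X_{j_{i+1}}, \dots, X_{j_{i+2}-1}$, and does not depend on the entering value $y$ as long as $y \leq z_{j_{i+1}}$.

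To conclude, let $R$ and $C$ denote the random maximum reward and total cost incurred inside macro-box $i+1$ under this $y$-independent internal stopping rule; by the previous paragraph $(R,C)$ is independent of $\mathcal{G}_i$, and one has $M_{i+1} - M_i = (R - y)_+ - C$. The defining identity of the macro-box's (generalized) reservation value reads
\[
\mathbb{E}\bigl[(R - z_{j_{i+1}})_+ - C\bigr] \;=\; 0.
\]
Combining this with the pointwise bound $(R - y)_+ \geq (R - z_{j_{i+1}})_+$, valid because $y \leq z_{j_{i+1}}$, a conditional expectation then gives $\mathbb{E}[M_{i+1} - M_i \mid \mathcal{G}_i] \geq 0$, as required.

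The main obstacle is the conceptual point in the second paragraph: one must justify that a macro-box genuinely behaves as an atomic unit, with an internal stopping time that is insensitive to the entering value $y$ and with a reservation value that controls the expected revenue of the whole unit via an indifference identity. Both facts are already distilled in the analysis behind \Cref{main-thm}, but they must be invoked cleanly here; once they are granted, the submartingale inequality follows from nothing more than the definition of the reservation value together with the monotonicity of $t \mapsto (R - t)_+$.
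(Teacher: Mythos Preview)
Your proof is correct and follows essentially the same route as the paper's: both arguments invoke \Cref{prop01} to see that the reservation values inside a macro-box are unchanged under truncation, and then appeal to the defining indifference equation of the (generalized) reservation value to conclude the conditional increment is nonnegative. Your version is simply a more explicit unpacking of the paper's three-sentence proof, making the roles of parts (2) and (3) of \Cref{prop01}, the $y$-independence of the internal stopping rule, and the monotonicity of $t\mapsto (R-t)_+$ fully visible. One small wrinkle: the equality $d(j_{i+1})=j_{i+2}-1$ need not hold verbatim when $z_{j_{i+2}}=z_{j_{i+1}}$, since $d(\cdot)$ is defined via a strict inequality whereas macro-boxes use $\leq$; but the conclusion you draw from it---that $z_{j_{i+1}}$ depends only on the boxes of the macro-box---remains true (entering with $y=z_{j_{i+1}}\geq z_{j_{i+2}}$ forces the optimal rule to stop before the next macro-box anyway), so the argument is unaffected.
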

        \begin{proof}
            We must show that for all $i\geq 1$, 
            \[
                \E{M_{i+1}|M_i,\,\dotsc,\,M_1}\geq M_i.
            \]
            By \Cref{prop01}, and by the definition of the $j_i$'s, we have that $z_{j_i},\,\dotsc,\,z_{j_{i+1}-1}$ remain unchanged if we truncate the sequence $\L_T$ to end at $b_{j_{i+1}-1}$. But in this case, $M_{i+1}$ is simply the performance of $\pi^*$ on the whole set. 
            By definition of reservation values, we then have that $\E{M_{i+1}|M_i}\geq M_i$.
        \end{proof}
        
        Another necessary condition for the result in~\cite{guo2019generalizing} is the {\em strong monotonicity} of the problem. 
        \begin{definition}[First-Order Stochastic Dominance]
            Random vector $\bm X'$ {\em stochastically dominates} $\bm X$ if, for every component $i$, and every $x\in \R$, we have $\Pr{X'_i\geq x}\geq \Pr{X_i\geq x}$.
        \end{definition}
        \begin{definition}[Strong Monotonicity]
            A problem is {\em strong monotone} if for any random variable $\bm X$, and any random variable $\bm X'$ which dominates $\bm X$, 
            Letting $\pi^*$ be the optimal policy for the distribution on $\bm X$,
            we have that the performance of $\pi^*$ over $\bm X'$ is at most the performance over $\bm X$.
        \end{definition}
        The following is a direct corollary of Appendix C3 in~\cite{guo2019generalizing}, when viewed over the macro-boxes, as we have a fixed order of exploration, and reservation prices.
        
        \begin{claim}The \PBPTree is strongly monotone.
        \end{claim}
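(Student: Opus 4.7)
The plan is to verify that the \PBPTree satisfies the hypotheses needed to apply the strong monotonicity framework of Appendix~C3 of~\citet{guo2019generalizing} for the classical \PBP. The key ingredient provided by \Cref{main-thm} is that the optimal strategy $\pi^*$ for any tree instance has a very structured form: the tree is linearized to $\L_T = (b_1,\ldots,b_n)$, reservation values $z_1,\ldots,z_n$ are precomputed, and $\pi^*$ simply opens $b_i$ if and only if the max reward observed so far is strictly below $z_i$. This is exactly Pandora's Rule on a single line of composite macro-boxes (in the sense of \Cref{def:macro}), and importantly the $z_i$ depend only on the distribution of $\bm X$, so they remain fixed when $\pi^*$ is executed against a different distribution.

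First, I would construct a coupling on a common probability space: by Strassen's theorem, the dominance $\bm X' \succeq \bm X$ admits a joint realization with $X_\ell' \geq X_\ell$ almost surely for every $\ell$. Executing $\pi^*$ under both realizations produces two trajectories that can be compared inductively along $\L_T$: the running max under $\bm X'$ pointwise dominates the running max under $\bm X$ at every common step, and because the opening rule compares this running max only against the fixed thresholds $z_i$, the $\bm X'$-trajectory terminates weakly earlier. In particular, the set of boxes opened under $\bm X'$ is a pathwise subset of the set opened under $\bm X$.

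Second, I would invoke the covered-call representation of~\cite{kleinberg2016descending} to convert these pathwise inequalities into an expected-value inequality. Since Pandora's Rule achieves equality in the covered-call bound on the reduced line instance, the expected performance of $\pi^*$ on $\bm X$ equals $\E{\max_{\ell \in S(\pi^*)}\min(X_\ell,z_\ell) - \sum_{\ell \in S(\pi^*)} c_\ell}$. A calculation mirroring the one in Appendix~C3 of~\citet{guo2019generalizing} then shows that running $\pi^*$ under $\bm X'$ weakly improves this expectation: the cost term falls (pathwise, by the subset property), and the capped-reward term rises in expectation because on the $\bm X'$-run the capped running max at termination is distributionally at least as large as on the $\bm X$-run. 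The macro-box decomposition organizes this computation cleanly, since within a macro-box all thresholds exceed the head threshold, reducing the comparison inside each macro-box to a one-shot composite box to which the classical argument applies without modification.

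The main obstacle is the expectation-level comparison of the reward term: pathwise, a $\bm X$-run that opens more boxes may uncover a reward $X_{\ell}$ that the shorter $\bm X'$-trajectory never sees, so the inequality on the reward term cannot be verified pointwise. The resolution is to exploit the threshold-optimality of $\pi^*$ together with the capping by $z_\ell$: whenever the $\bm X'$-run terminates before a box $\ell$ that the $\bm X$-run opens, the $\bm X'$-running-max already exceeds $z_\ell$ by the stopping condition, so after capping, the contribution of $X_\ell$ to the covered-call expression is dominated in conditional expectation by the $\bm X'$-capped-running-max. Organizing this argument one macro-box at a time, and invoking the single-box strong monotonicity result within each, yields the conclusion.
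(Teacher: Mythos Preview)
Your reduction is the same as the paper's: both note that, by \Cref{main-thm}, the optimal tree strategy has a fixed linearized order with precomputed thresholds, and that Appendix~C3 of~\citet{guo2019generalizing} therefore applies once one passes to macro-boxes. The paper's entire proof is that one sentence; you are unpacking it.

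One intermediate step, however, is misstated. The identity you write---that the performance of $\pi^*$ on $\bm X$ equals $\E{\max_{\ell\in S(\pi^*)}\min(X_\ell,z_\ell) - \sum_{\ell\in S(\pi^*)} c_\ell}$---is not Kleinberg's covered-call identity and fails already for a single classical box: take $c=1$ and $X=10$ with probability $0.2$ (so $z=\zeta=5$); the true expected performance is $\E{X}-1=1$, whereas your expression gives $\E{\min(X,5)}-1=0$. Kleinberg's identity carries no explicit cost term (the cap by $\zeta$ is what absorbs the cost), and in any case the paper warns in its related-work discussion that the technique of~\cite{kleinberg2016descending} does not extend directly to order-constrained settings, so capping by the \emph{generalized} $z_\ell$ at the individual-box level is not the right object.

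The repair is exactly the one you gesture at in your final paragraph: argue at the macro-box granularity. By \Cref{prop01}(3), each macro-box $M_k$ with head threshold $Z_k$ has random reward $R_k$ and cost $C_k$ that do not depend on the entry value (for entry values in $[0,Z_k]$) and satisfy $\E{(R_k-Z_k)_+}=\E{C_k}$. At that level the instance is a bona fide classical Pandora problem across independent macro-boxes, and the argument of Appendix~C3 in~\cite{guo2019generalizing}---together with your Strassen coupling and pathwise-subset observations---goes through verbatim.
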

        
        These two previous claims imply that Lemma 25 in~\cite{guo2019generalizing} apply.
        Thus, for the \PBPTree, 
        it suffices to take 
        \[ 
            N\geq C_2\cdot \tfrac n{\epsilon^2}\log^2(\tfrac 1\epsilon)\log(\tfrac n \epsilon)\log(\tfrac n{\epsilon\delta})
        \]
        for some universal constant $C_2$.

        \paragraph{Lower Bounds.}
        This latter result is tight up to $\operatorname{poly}\log(\tfrac n{\epsilon\delta})$ terms: \cite{guo2019generalizing} show that it takes at least $\Omega(\tfrac n{\epsilon^2})$ samples to get the desired degree of accuracy.

\bigskip
\section{The Adaptivity Gap of the \PBP is unbounded}
\label{app:counterexamples}
    In this section we present a counterexample showing that the adaptivity gap for the \PBP is unbounded. 
    This is inspired by an example found in~\cite{singla2018combinatorial}.
    
    Let $p>0$, and consider $n$ identical boxes with cost $c=1-p/2$, and reward $\tfrac 1{p^2}$ with probability $p^2$, 0 otherwise. 
    Since $c<1$, the adaptive optimal strategy is to open boxes until you get the reward, which guarantees reward $\tfrac 1{p^2}$ and costs $\tfrac c{p^2}$ in expectation,
    for a total expected revenue of $\tfrac1{p^2}(1-c) = \tfrac 1{2p}$
    
    We now consider the non-adaptive strategy which opens $k$ boxes.
    It earns $\tfrac 1{p^2}$ with probability $1- (1-p^2)^k$ and 0 otherwise, and pays $ck$.
    Note that $(1-p^2)^k$ is convex in $k$, and so $\tfrac 1{p^2}(1-(1-p^2)^k)$ is concave, and thus has non-increasing derivatives. 
    At $k=\tfrac 1p$, we have
    \begin{align}
        \nonumber \frac{\mathrm d}{\mathrm dk} \left[-ck+\tfrac 1{p^2}(1-(1-p^2)^k)\right] &= -c-(1-p^2)^k\cdot \frac{\ln(1-p^2)}{p^2} \\
        \label{eq:bad-ex-line-2}&\leq -c+\frac{-(-p^2)+p^4}{p^2}\cdot (1-p^2)^k &\text{if }p^2\leq \tfrac12\\
        \nonumber &\leq -c+(1+p/8)(1-p^2)^k&\text{if }p\leq \tfrac12\\
        \nonumber &\leq -c+(1+p/8)e^{-p}\\
        \label{eq:bad-ex-line-4} &\leq -(1-p/2)+(1+p/8)(1-0.632 p) \\ 
        \nonumber &\leq (\tfrac 12 - 0.632 + 0.125)p = -0.007p<0
    \end{align}
    Where \eqref{eq:bad-ex-line-2} holds because $\ln(1+x)\geq x-x^2$ for $x\in [-\tfrac 12,0]$: at $x=0$, $\ln(1+x)=x-x^2 = 0$, and $\tfrac {\mathrm d}{\mathrm dx}\ln(1+x) = \tfrac 1{1+x}\leq\tfrac {\mathrm d}{\mathrm dx} x-x^2 = 1-2x$ over the domain, by convexity of the former.
    \eqref{eq:bad-ex-line-4} holds since $e^{-p}\leq 1-(1-1/e)p$ for $p\in [0,1]$, again by convexity. 
    Hence, the derivative is negative, and we conclude the optimum is attained on $1\leq k <1/p$. 
    However, $(1-p^2)^k\geq 1-kp^2$, so
    \begin{align}
        \nonumber -ck+\tfrac 1{p^2}(1-(1-p^2)^k)&\leq -ck+\tfrac 1{p^2}kp^2 = (1-c)k
    \end{align}
    Since $1-c=p/2$, and $k\leq 1/p$, this upper-bounds the total revenue by $\tfrac 12$.
    
    Recalling that the adaptive strategy earned on average $1/2p$, 
    then the ratio of the two revenues is $\tfrac 1p$. 
    Taking $p\to 0$, this suggests that the adaptivity gap can be arbitrarily large. \qed
    
    \bigskip
    We remark that this counterexample works also for the single line constrained case as the boxes are all equal and the order is irrelevant.

    \bigskip
\section{Proof of Claim~\ref{cl:reservazionLine}}
\label{app:proofs}

        Recall, we have denoted the expected future reward following $\tau$, given that the player has just opened box $b_{i-1}$ and the largest reward that has been sampled in the past has been $x$, as:
        \[
        \phi^{\tau}(x,i)\ :=\ \E{\max\left\{x,\ \textstyle\max_{j=i}^{\tau(x,i)}X_j\right\}-\textstyle\sum_{j=i}^{\tau(x,i)} c_j},
        \]
        and denoted  $\Phi(x,i)=\max_{\tau}\phi^{\tau}(x,i)=\phi^{\tau^*(x,i)}(x,i)$, for all $0\leq i\leq n$. Observe that $\Phi$ is a non decreasing function in its first argument.
            
            \bigskip
            For the first part of the Claim we are going to prove that $\forall i=1, \dots, n$ the functions 
            \begin{equation}
                \label{eq:welldefined}
                H_i(x)=\E{\left( \max_{j=i}^{\tau^*(x,i)}X_j-x \right)_+-\sum_{j=i}^{\tau^*(x,i)}c_j}=\Phi(x,i)-x,
            \end{equation} 
            admit at least one zero each in $[0,\infty)$ and that the each solution set admits a minimum. In order to do so it is sufficient to show that the $H_i$ are continuous and monotone non increasing: it is then straightforward to conclude, since $H_i(0)=\Phi(0,i)\geq 0$ and $\lim_{x \to \infty}H_i(x)=0.$ \\
            For any two numbers $b\geq a \geq 0$ and $i$ we have that 
            \begin{align}
            \label{eq:lipshitz}
                H_i(b)-H_i(a)=\Phi(b,i)-\Phi(a,i)+a-b 
                 \leq  \E{\max_{j={i}}^{\tau^*(b,i)}X_j \vee b -\max_{j={i}}^{\tau^*(b,i)}X_j \vee a} - b+a \leq 0,
            \end{align}
            where in the first inequality we used the fact that the stopping time $\tau^*(b,i)$ is sub-optimal for $\phi^{\tau}(a,i)$. In particular (\ref{eq:lipshitz}) means that $\Phi(\cdot,i)$ is $1$-Lipschitz and hence continuous. So we can claim the continuity of all the $H_i(\cdot)$ as compositions of continuous functions.
            
            We also argue that from (\ref{eq:welldefined}) it is clear that the $H_i$ (and hence the solutions $z_i$) do not depend on the particular stopping time used in the definition (which may not be unique), but only on the optimal values $\Phi(x,i)$, which are unique.
            
            \bigskip
            Let's now focus on the second part, we want to prove that for every $i$ there exists a $\tau^*(z_i,i)$ which is different from $i-1$, as long as $z_i>0$. We recall that we are only considering stopping times that depends deterministically on the realizations, hence the event $\tau(x,i)$ opens or not box $b_i$ has either probability $1$ or $0.$
            Let's fix any $i$ for which $z_i>0$ and let $\overline \tau_i$ be the best strategy between all those that open box $b_i$, we want to prove by contradiction that 
            $$
                \delta=\Phi(z_i,i)-\phi^{\overline{\tau}_i}(z_i,i)=0 
            $$
            So let's assume $\delta > 0$. First we note that for every $\varepsilon>0$ such that $z_i - \varepsilon > 0$ we must have $\tau^*(z_i-\varepsilon,i)\geq i$, because otherwise $z_i$ would not be minimal in the solution set of $H_i(\cdot)=0.$ \\ Moreover it holds that $\phi^{\tau^*(z_i-\varepsilon,i)}(z_i,i) \leq \phi^{\overline{\tau}_i}(z_i,i)$ by definition of $\overline{\tau}_i$.
            Using the Lipschitz property of $\Phi(\cdot,i)$ and $\phi^{\tau^*(z_i - \varepsilon,i)}(\cdot,i)$, we hence get the following contradiction:
            \begin{align*}
                0 < \delta \leq \Phi(z_i,i) - \phi^{\tau^*(z_i-\varepsilon,i)}(z_i,i) \pm \Phi(z_i-\varepsilon,i) \leq 2 \varepsilon \to 0 \text{ for } \varepsilon \to 0.
                \\[-3em]
            \end{align*}
            \qed

\end{document}